\journal{International Journal of Approximate Reasoning}
\newtheorem{theorem}{Theorem}[section]
\newtheorem{proposition}{Proposition}[section]
\newtheorem{corollary}{Corollary}[section]
\newtheorem{lemma}{Lemma}[section]
\newtheorem{definition}{Definition}[section]
\newtheorem{remark}{Remark}[section]
\newtheorem{example}{Example}[section]
\newcommand{\prt}[1]{\langle #1\rangle}
\newenvironment{proof}{\noindent\textbf{Proof: }}{\hfill \small $\Box$}
\newcommand{\igdef}{\stackrel{def}{=}}
\newcommand{\raw}{\rightarrow}
\newcommand{\Twoheadrightarrow}{{\Rightarrow \!\! >} }
\newcommand{\Rawint}{\Rightarrow \!\! >}
\newcommand{\RR}{\mathbb{R} }
\newcommand{\myllle}{\prec\hspace{-0.7ex}\prec}
\begin{document}

\begin{frontmatter}

\title{Semi-BCI Algebras}

\author{Regivan H.N. Santiago$^a$, Benjamin Bedregal$^a$, Jo\~ao Marcos$^a$, Carlos Caleiro$^b$, Jocivania Pinheiro$^{a,c}$}

%

\address[addA]{Group for Logic, Language, Information, 
	Theory and Applications -- \textbf{LoLITA}\\
	Department of Informatics and Applied Mathematics -- \textbf{DIMAp}\\
	Federal University of Rio Grande do Norte --  \textbf{UFRN}\\
	59.072-970. Natal, RN, Brazil} 
\address[addB]{Instituto de Telecomunicações\\
	Department of Mathematics --- IST, Universidade de Lisboa\\
	1049-001. Lisboa, Portugal}
\address[addC]{Group for Theory of Computation, Logic and Fuzzy Mathematics\\
	Department of Natural Sciences, Mathematics and Statistics --- \textbf{DCME}\\
	Center of Exact and Natural Science --- \textbf{CCEN}\\
	Rural Federal University of SemiArid --  \textbf{UFERSA}\\
	59.625-900. Mossoró, RN, Brazil}

\begin{abstract}
	The notion of semi-BCI algebras is introduced and some of its properties are investigated. This algebra is another generalization for BCI-algebras. It  arises from the ``intervalization'' of BCI algebras. Semi-BCI  have a similar structure  to  Pseudo-BCI algebras however they are  not the same. In this paper we also provide an investigation on the similarity between these classes of algebras by showing how they relate to the process of intervalization.
\end{abstract}

\begin{keyword}
Interval-valued Fuzzy Logic \sep
 BCI-algebras \sep
Semi-BCI algebras \sep
Pseudo-BCI algebra
\end{keyword}

\end{frontmatter}

\section{Introduction}


One of the most well known references on the algebraic approach to logics is the book of Rasiowa \cite{Rasiowa1974} which dates to 70s. In this book, at 
pages 16-17,  the notion of implicative algebra, which aim at modelling a simple notion of implication is provided: An \textbf{implicative algebra} is an algebra $\langle A,\Rightarrow,\top \rangle$ of type $(2,0)$ which satisfies the following properties:

\begin{enumerate}[labelindent=\parindent, leftmargin=*,label=(i-\arabic*)]
	\item $a\Rightarrow a=\top$,
	\item If $a\Rightarrow b=\top$ and $b\Rightarrow c=\top$, then $a\Rightarrow c=\top$,
	\item  If $a\Rightarrow b=\top$ and $b\Rightarrow a=\top$, then $a=b$.
\end{enumerate}


A direct consequence of such definition is the establishment of an order relation ``$\leq$'' on $A$, which is known as \textbf{Order Property (OP)} of implications:

\begin{equation}
a\leq b\mbox{ if and only if } a\Rightarrow b=\top.
\end{equation}

As a consequence, many implications are axiomatized preserving \textbf{(OP)}. 
However, some interesting implications in the field of fuzzy logics do not satisfy such requirement, for example (see \cite{Bac2008Book,PBSS18,Yager1980}):


Consider the  algebra $\langle [0,1],\rightarrow_{\mbox{\tiny YG}},1 \rangle$, such that:

\begin{equation*}
x \rightarrow_{\mbox{\tiny YG}} y = 
\begin{cases}
1 \text{, if $x=y=0$}
\\
y^x \text{, otherwise}
\end{cases}.
\end{equation*}

In this case $0.3\leq 0.5$, but $0.3\rightarrow_{YG} 0.5\approx 0.81225$. However, ``$x\rightarrow_{YG} y=1$ implies $x\leq y$''.

The interval counterpart of \L ukasiewicz implication introduced by Bedregal and Santiago \cite{BS2013a} also fails to satisfy \textbf{(OP)}. The authors, however, revealed that the resulting implication satisfy:
\begin{enumerate}
	\item if $X\ll Y$\footnote{$X\ll Y$ iff $\overline{X}<\underline{Y}$.}, then $X\rightarrow Y=1$;
	\item if $X\rightarrow Y=1$, then $X\leq_{KM} Y$.
\end{enumerate}

The relation ``$\ll$'' is precisely the way-below relation \cite{Gierz2003} of the usual Kulisch-Miranker order on intervals ``$\leq_{KM}$''. Way-below relations, ``$\myllle $'', are auxiliary relations \cite{Gierz2003} of partial orders ``$\preccurlyeq$''; they have the 
following properties:

\begin{enumerate}
	\item if $x\myllle y$, then $x\preccurlyeq y$.
	\item if $u\preccurlyeq x\myllle  y\preccurlyeq z$, then $u\myllle  z$.
	\item if a smallest element $0$ exists, then $0\myllle  x$.
\end{enumerate} 

Since \textbf{(OP)} connects an implication to the underlying order relation and this is connected to auxiliary relations, this paper proposes to internalize such connection through two implications; one connected to the usual partial order and the other connected to its way-below relation. The resulting algebraic structure is called \textbf{semi-BCI algebra} which abstracts both BCI-algebras and their intervalization.

Another generalization for BCI-algebras which contains two implications is called \textbf{Pseudo-BCI algebra} which was proposed by W. A. Dudek and Y. B. Jun \cite{Dudek}. The connection of such algebras to Semi-BCIs is investigated here.	  

Since SBCIs encompass both BCIs and its interval counterpart they tend to model logics in which the notion of impreciseness is required. 

The paper is organized in the following way: Section \ref{Sec-BCI} provides a brief review of BCI-algebras and their properties. 
Section \ref{SecIntEst} provides an overview of the intervalization process. Section \ref{Sec-ISBCIA} shows the intervalization of BCI-algebras and some properties of this interval algebra. Section \ref{Sec-SBCIA} introduces the notion of semi-BCI algebra and prove some of its properties. Section \ref{SecPseudo} discusses the relation between Semi-BCI algebras and Pseudo-BCI algebras. Finally, section \ref{SecFinalRem} provides some concluding remarks.

\section{BCI-Algebras}\label{Sec-BCI}

BCI-algebras are mathematical structures for modelling fuzzy logics. They were introduced by Iséki \cite{Iseki1966} in the 60's and since then have been extensively investigated. There are several axiom systems for BCI-algebras. We will present here the axiom systems defined by \cite{Hua06}, in which he assures that the BCIs are algebras of the form $\langle A,\ast,\perp \rangle$ which satisfy the following properties:

\begin{enumerate}[labelindent=\parindent, leftmargin=*,label=BCI-\arabic*]
	\item $((x \ast y)\ast(x\ast z))\ast (z\ast y)=\perp$,
	\item $(x\ast(x\ast y))\ast y=\perp$,
	\item $x\ast x=\perp$,
	\item $x\ast y=\perp$ and $y\ast x=\perp\ \Longrightarrow x=y$.
\end{enumerate}

A BCI-algebra is called BCK-algebra if it also satisfies:

\begin{enumerate}[labelindent=\parindent, leftmargin=*,label=BCK-\arabic*]
	\item $\perp \ast x=\perp$
\end{enumerate}

On any BCI-algebra it is possible to define a partial order ``$\unlhd$'' as: ``$x\unlhd y$ iff $x\ast y=\perp$''. 
Therefore, a BCI-algebra is BCK if and only if $\perp$ is its least element.

\begin{example} The following algebras are BCI.
	
	\begin{enumerate}
		\item $\langle [0,+\infty),\ast,0 \rangle$, s.t. $x \ast y=\max\{0,x-y\}$.
		\item $\langle \mathcal{P}(X),\ominus,\emptyset \rangle$, where $A\ominus B$ is the set difference between $A$ and $B$.
	\end{enumerate}
\end{example}

BCI-logics interpret the Curry combinators: (B) $\lambda xyz.x(yz)$, (C) $\lambda xyz.xzy$ and 
(I) $\lambda x.x$ --- see \cite{hindley1986}. This set of combinators are functional counterparts for some Fuzzy Implications. 
They can also be interpreted by algebras: $\mathcal{C}=\langle A,\rightarrow,\top \rangle$ which satisfy:

\begin{enumerate}[labelindent=\parindent, leftmargin=*,label=($\mathcal{C}$-\arabic*)]
	\item $(y\rightarrow z)\rightarrow ((z\rightarrow x)\rightarrow(y\rightarrow x))=\top$,\label{PropC1}
	\item $x\rightarrow ((x\rightarrow y)\rightarrow y)=\top$,\label{PropC2}
	\item $x\rightarrow x=\top$, \label{Curry-I}
	\item $x\rightarrow y=\top$ and $y\rightarrow x=\top$ imply $x=y$.\label{PropC4}
\end{enumerate}

On any such structure it is possible to define a partial order ``$\preceq$'' as: 

\begin{enumerate}[labelindent=\parindent, leftmargin=*,label=($\mathcal{C}$-\arabic*)]
	\setcounter{enumi}{4}
	\item $x\preceq y$ iff $x\rightarrow y=\top$.\label{PropC5}
\end{enumerate} 

\begin{example} The following algebra satisfies properties \normalfont{\ref{PropC1}-\ref{PropC5}}:
	\begin{enumerate}
		\item $\langle [0,1],\rightarrow,1 \rangle$, s.t. $x\rightarrow y=\min (1,1-x+y)$.
	\end{enumerate}
\end{example}

There is a way to obtain the above axioms from those of BCI-algebras and vice-versa, the correspondence can be obtained in the following way:

\begin{proposition}\label{PropEqvBCIdBCI}
	Let $\langle A, \ast,\perp \rangle$ be a BCI-algebra. The algebra $\mathcal{C}=\langle A,\rightarrow,\top \rangle$, where: $x\rightarrow y \stackrel{def}{=} y \ast x$ and 
	$\top \stackrel{def}{=} \perp$ satisfies the axioms \normalfont{\ref{PropC1}-\ref{PropC4}}.
\end{proposition}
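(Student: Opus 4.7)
The plan is purely mechanical verification: each axiom $(\mathcal{C}\text{-}i)$ for $i=1,\ldots,4$ should translate, via the substitutions $x \rightarrow y := y \ast x$ and $\top := \perp$, directly into one of the BCI axioms (possibly after a harmless renaming of variables). So I would just expand each left-hand side carefully and match it against the appropriate BCI identity.

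First I would handle the easy cases. Axiom $(\mathcal{C}\text{-}3)$, $x \rightarrow x = \top$, becomes $x \ast x = \perp$, which is literally BCI-3. Axiom $(\mathcal{C}\text{-}4)$ becomes: if $y \ast x = \perp$ and $x \ast y = \perp$ then $x = y$, which is BCI-4 (the antisymmetry axiom is symmetric in $x$ and $y$, so the order is immaterial). These require no work beyond noting the definition.

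Next I would tackle $(\mathcal{C}\text{-}2)$: working from the inside out, $x \rightarrow y = y \ast x$, then $(x \rightarrow y) \rightarrow y = y \ast (y \ast x)$, and finally $x \rightarrow ((x \rightarrow y) \rightarrow y) = (y \ast (y \ast x)) \ast x$. Setting this equal to $\top = \perp$ gives $(y \ast (y \ast x)) \ast x = \perp$, which is BCI-2 with $x$ and $y$ swapped.

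The most bookkeeping-heavy step, and the only place where a slip is likely, is $(\mathcal{C}\text{-}1)$. I would compute the three innermost sub-terms $y \rightarrow z = z \ast y$, $z \rightarrow x = x \ast z$, $y \rightarrow x = x \ast y$; then $(z \rightarrow x) \rightarrow (y \rightarrow x) = (x \ast y) \ast (x \ast z)$; and finally
\[
(y \rightarrow z) \rightarrow \bigl((z \rightarrow x) \rightarrow (y \rightarrow x)\bigr) = \bigl((x \ast y) \ast (x \ast z)\bigr) \ast (z \ast y),
\]
so that $(\mathcal{C}\text{-}1) = \top$ becomes exactly BCI-1. The main obstacle is simply keeping the argument order of each outer $\rightarrow$ straight when the operands are themselves compound expressions; once this is done transparently, the proposition follows immediately from BCI-1--BCI-4.
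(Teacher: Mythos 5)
Your proof is correct and follows essentially the same route as the paper: unfolding the definitions $x\rightarrow y := y\ast x$, $\top := \perp$ and matching each axiom \ref{PropC1}--\ref{PropC4} against the corresponding BCI axiom (with the variable swap in BCI-2 correctly noted). The paper only writes out the \ref{PropC1} case explicitly and declares the rest analogous, so your version is just a more complete rendering of the same mechanical verification.
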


\begin{proof}
	$(y\rightarrow z)\rightarrow ((z\rightarrow x)\rightarrow(y\rightarrow x))\igdef ((z\rightarrow x)\rightarrow(y\rightarrow x)) \ast 
	(y\rightarrow z)\igdef\dots\igdef ((x\ast y)\ast(x\ast z))\ast (z\ast y)=\perp$ (BCI-1). 
	But, $\top\igdef\perp$. The other axioms are similarly proved.	
\end{proof}

\begin{proposition}
	Let $\mathcal{C}=\langle A,\rightarrow,\top \rangle$ be an algebra satisfying properties \normalfont{\ref{PropC1}-\ref{PropC4}}. The algebra 
	$\langle A,\ast,\perp \rangle$, where: $x\ast y\igdef y\rightarrow x$ and $\perp\igdef\top$ is a BCI-algebra. 
\end{proposition}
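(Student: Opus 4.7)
The plan is to verify each of the four BCI axioms directly by rewriting them in terms of $\rightarrow$ using the defining equation $x \ast y \igdef y \rightarrow x$ (and $\perp \igdef \top$), and then appeal to the corresponding axiom among \ref{PropC1}--\ref{PropC4}. In effect, this is simply the converse of Proposition \ref{PropEqvBCIdBCI}: since the translation $x \ast y = y \rightarrow x$ is symmetric with the inverse translation $x \rightarrow y = y \ast x$, the same rewriting will make each axiom fall into its partner on the other side.

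More concretely, I would proceed axiom by axiom. For BCI-3, unfolding gives $x \ast x = x \rightarrow x = \top = \perp$, which is exactly \ref{Curry-I}. For BCI-4, assuming $x \ast y = \perp$ and $y \ast x = \perp$ translates to $y \rightarrow x = \top$ and $x \rightarrow y = \top$, so \ref{PropC4} delivers $x = y$. For BCI-2, the expression $(x \ast (x \ast y)) \ast y$ unfolds layer by layer: first $x \ast y = y \rightarrow x$, then $x \ast (x \ast y) = (y \rightarrow x) \rightarrow x$, and finally $(x \ast (x \ast y)) \ast y = y \rightarrow ((y \rightarrow x) \rightarrow x)$, which equals $\top$ by \ref{PropC2}. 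For BCI-1, the nested expression $((x \ast y) \ast (x \ast z)) \ast (z \ast y)$ unfolds to $(y \rightarrow z) \rightarrow ((z \rightarrow x) \rightarrow (y \rightarrow x))$, matching the left-hand side of \ref{PropC1} exactly.

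There is no real obstacle here; the only thing to be careful about is tracking the order of arguments through the translation, since $\ast$ and $\rightarrow$ swap their operands. Once the unfoldings are written out carefully, each resulting identity is precisely one of the Curry-style axioms, and $\perp \igdef \top$ closes the argument. I would therefore present the proof as a short display of the four rewritings, citing the companion Proposition \ref{PropEqvBCIdBCI} to note that the translation is an involution, which explains why the verification is entirely symmetric.
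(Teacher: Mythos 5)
Your proposal is correct and follows exactly the route the paper takes: the paper's proof is the one-line remark ``Analogous to Proposition~2.1,'' i.e.\ the same axiom-by-axiom unfolding under the translation $x \ast y = y \rightarrow x$, $\perp = \top$, which you simply carry out explicitly. All four of your rewritings (including the argument-order bookkeeping in BCI-1 and BCI-2) check out against \ref{PropC1}--\ref{PropC4}.
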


\begin{proof}
	Analogous to Proposition \ref{PropEqvBCIdBCI}.
	
\end{proof}

\begin{corollary}
	The relation ``$\preceq$'' is the dual partial order of ``$\unlhd$''; namely $x\preceq y$ if and only if $y\unlhd x$. 
\end{corollary}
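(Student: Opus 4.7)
The plan is to prove the biconditional by simply unfolding the three definitions in sequence and applying the translation given in Proposition \ref{PropEqvBCIdBCI}. There is no genuine computation here: the corollary is a direct consequence of the way the two algebraic languages are intertranslated.

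Concretely, I would start from $x \preceq y$ and chase equivalences. By definition \ref{PropC5}, this is equivalent to $x \rightarrow y = \top$. Using the defining equations of the translation between a BCI-algebra and its associated $\mathcal{C}$-algebra, namely $x \rightarrow y \igdef y \ast x$ and $\top \igdef \perp$, the equation $x \rightarrow y = \top$ is by definition the same equation as $y \ast x = \perp$. Finally, by the definition of the order $\unlhd$ on the underlying BCI-algebra (that is, $u \unlhd v$ iff $u \ast v = \perp$), the equation $y \ast x = \perp$ is exactly $y \unlhd x$. Chaining these equivalences yields $x \preceq y$ iff $y \unlhd x$.

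The only subtlety worth mentioning is that one should be clear about which direction of the correspondence is being used: the statement sits naturally after Proposition \ref{PropEqvBCIdBCI}, so I would assume we start with a BCI-algebra $\langle A, \ast, \perp\rangle$, define $\mathcal{C}$ via $x \rightarrow y \igdef y \ast x$ and $\top \igdef \perp$, and then both orders $\unlhd$ and $\preceq$ live on the same carrier $A$. The dual version (starting from a $\mathcal{C}$-algebra and defining $\ast$) is symmetric and requires no separate argument. I do not anticipate any real obstacle; the proof is essentially a one-line chain of definitional rewrites, and the only care needed is to keep the order of the arguments of $\ast$ and $\rightarrow$ straight, since it is exactly the swap $x \rightarrow y = y \ast x$ that produces the duality of the induced orders.
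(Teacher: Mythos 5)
Your proof is correct and matches the paper's intent: the paper states this corollary without proof precisely because it is the one-line chain of definitional rewrites you give ($x \preceq y$ iff $x \rightarrow y = \top$ iff $y \ast x = \perp$ iff $y \unlhd x$), following Proposition \ref{PropEqvBCIdBCI}. Your remark about keeping the argument swap $x \rightarrow y = y \ast x$ straight identifies exactly where the duality comes from, so nothing is missing.
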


\paragraph{Terminology } Since one kind of algebra can be obtained from the other and any result obtained for one  can be easily translated, by duality, to the other, both structures are called
BCI-algebras. This work consider the second kind of structure for our generalization. In this context, whenever $x\rightarrow\top=\top$, i.e. $x\preceq\top$, the BCI-algebra $\mathcal{C}=\langle A,\rightarrow,\top \rangle$ will be called \textbf{BCK-algebra}.   Now, let be some  properties of BCI-algebras:

\paragraph{Some Properties of BCI-Algebra } (for more details see \cite{Hua06})

\renewcommand{\thefootnote}{\Roman{footnote}}

\begin{enumerate}[labelindent=\parindent, leftmargin=*,label=(A-\arabic*)]
	\item $\top\preceq x$ implies $x=\top$,\label{BCI-Pr1}
	\item $x\preceq y$ implies $y\rightarrow z\preceq x\rightarrow z$,\label{BCI-Pr2} --- (First place antitonicity)
	\item $x\preceq y$ implies $z\rightarrow x\preceq z\rightarrow y$,\label{BCI-Pr7}  --- (Second place isotonicity) 	
	\item $x\preceq y$ and $y\preceq z$ implies $x\preceq z$,\label{BCI-Trans}
	\item $x\rightarrow (y\rightarrow z)=y\rightarrow (x\rightarrow z)$,\label{Curry-C} --- (Exchange)
	\item $x\preceq y\rightarrow z$ implies $y\preceq x\rightarrow z$,\label{BCI-Pr5}
	\item $x\rightarrow y\preceq (z\rightarrow x)\rightarrow (z\rightarrow y)$,\label{Curry-B}\label{BCI-Pr6}
	\item $\top\rightarrow x=x$,\label{BCI-Pr8} --- (Left Neutrality)
	\item $((y\rightarrow x)\rightarrow x)\rightarrow x=y\rightarrow x$,\label{BCI-Pr9}
	\item $x\rightarrow y\preceq (y\rightarrow x)\rightarrow\top$,
	\item $(x\rightarrow y)\rightarrow\top=(x\rightarrow\top)\rightarrow (y\rightarrow\top)$. \label{BCI-Pr11}	
\end{enumerate}

Properties \ref{Curry-B}, \ref{Curry-C} and \ref{Curry-I} model the combinators B, C and I of Combinatorial Logic \cite{hindley1986}.

\begin{proposition} \label{pro-BCI-BCK}
	Let $\prt{A,\rightarrow,\top}$ be a BCI-algebra. $\prt{A,\rightarrow,\top}$ is a BCK-algebra if and only if  for each $x\in A$ there exists $y\in A$ 
	such that $y\preceq x$ and $y\preceq \top$.  
\end{proposition}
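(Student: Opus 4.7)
The forward implication is immediate: if $\langle A,\rightarrow,\top\rangle$ is a BCK-algebra, then $x\preceq\top$ for every $x\in A$, so given any $x$ it suffices to take $y=x$, which satisfies $y\preceq x$ by reflexivity and $y\preceq\top$ by the BCK hypothesis.

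For the converse, fix an arbitrary $x\in A$; the goal is to show $x\preceq\top$, i.e.\ $x\rightarrow\top=\top$. By hypothesis, pick $y\in A$ with $y\preceq x$ and $y\preceq\top$. Using the definition of $\preceq$ (property \ref{PropC5}), these translate to
\[
y\rightarrow x=\top \qquad\text{and}\qquad y\rightarrow\top=\top.
\]
The key step is to feed these two equalities into property \ref{BCI-Pr11}, applied with $x$ replaced by $y$ and $y$ replaced by $x$, yielding
\[
(y\rightarrow x)\rightarrow\top \;=\; (y\rightarrow\top)\rightarrow(x\rightarrow\top).
\]
Substituting $\top$ for $y\rightarrow x$ and $y\rightarrow\top$, and then applying left neutrality \ref{BCI-Pr8} on both sides, the left-hand side collapses to $\top\rightarrow\top=\top$ and the right-hand side collapses to $\top\rightarrow(x\rightarrow\top)=x\rightarrow\top$. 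Hence $x\rightarrow\top=\top$, as required. Since $x$ was arbitrary, every element lies below $\top$, so the algebra is BCK.

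The only nontrivial point is recognizing that \ref{BCI-Pr11} is precisely the right identity to combine the two assumptions $y\preceq x$ and $y\preceq\top$ into the single inequality $x\preceq\top$; once one looks for an identity whose left-hand side involves $y\rightarrow x$ and whose right-hand side splits into terms involving $y\rightarrow\top$ and $x\rightarrow\top$, \ref{BCI-Pr11} is the natural candidate, and left neutrality finishes the simplification. No other properties of the BCI-algebra are needed beyond \ref{BCI-Pr8}, \ref{BCI-Pr11}, and \ref{PropC5}.
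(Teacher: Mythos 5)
Your proof is correct, but it takes a mildly different route from the paper's in the converse direction. The paper argues by contradiction: assuming some $a\not\preceq\top$, it instantiates the axiom \ref{PropC1} at $(b,\top,a)$ (where $b\preceq a$ and $b\preceq\top$), simplifies via left neutrality \ref{BCI-Pr8}, and observes that \ref{PropC1} would have to fail, which is impossible in a BCI-algebra. You instead give a direct derivation: instantiating the derived identity \ref{BCI-Pr11}, i.e. $(y\rightarrow x)\rightarrow\top=(y\rightarrow\top)\rightarrow(x\rightarrow\top)$, and collapsing both sides with \ref{BCI-Pr8} to get $x\rightarrow\top=\top$ outright. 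The two computations are cousins --- the paper's instantiation of \ref{PropC1}, once simplified, also yields $a\rightarrow\top=\top$ directly, so the contradiction framing is inessential --- but your choice of key identity genuinely differs: the paper leans only on the axiom \ref{PropC1} plus \ref{BCI-Pr8}, making it self-contained at the axiomatic level, whereas you invoke the listed property \ref{BCI-Pr11} (a derived BCI fact, cited from Huang), which buys a shorter, positive argument with no detour through negation. One small bookkeeping note: in the forward direction your appeal to reflexivity $x\preceq x$ rests on \ref{Curry-I} ($x\rightarrow x=\top$), so strictly speaking that axiom is also used, beyond \ref{BCI-Pr8}, \ref{BCI-Pr11} and \ref{PropC5}; this does not affect correctness.
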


\begin{proof} ($\Rightarrow$) Straightforward because in BCK-algebras $\top$ is the greatest element, i.e. $x\preceq \top$ for each $x\in A$.
	
	($\Leftarrow$)
	Suppose that $\prt{A,\rightarrow,\top}$ is not a BCK-algebra. Then, there exists $a\in A$ such that $a\not \preceq \top$. By hypothesis there exists 
	$b\in A$ such that $b\preceq a$ and $b\preceq \top$. So, by \ref{BCI-Pr8} and definition of $\preceq$, 
	$(b\rightarrow \top)\rightarrow ((\top\rightarrow a)\rightarrow (b\rightarrow a)) = 
	\top\rightarrow (a\rightarrow \top)=a\rightarrow \top\neq \top$. Therefore, \ref{PropC1} fails.
	
\end{proof}

As stated  in the Introduction, this paper shows that the behavior of the process of intervalization does not preserve \textbf{(OP)}. In order
to precisely define what does it mean,  the next section  introduces the concept of intervalization over abstract partial orders.

\section{Intervalization of Structures}\label{SecIntEst} 

The limited capacity of machines to store just a finite set of finitely represented objects constraints the automatic calculation (computation) of structures in which a
machine representation of some objects exceeds such capacity. In the case of real numbers, although most programs provide highly 
accurate results, it can happen that rounding errors built up during each step in the computation produce results which are not even meaningful. 
For more details see the early Forsythe’s report \cite{Forsythe1970}. In 1988, Siegfried Rump \cite{Rump1988} published the result of a computed 
function in an IBM S/370 mainframe. The function was:

\begin{equation}
y=333.75b^6+a^2(11a^2b^2-b^6-121b^4-2)+5.5b^8+\frac{\displaystyle a}{\displaystyle 2b}.\label{EqRump}
\end{equation}

He calculated for $a=77617.0$ and $b=33096.0$, and the result was:

\begin{enumerate}[labelindent=\parindent, leftmargin=*,label=\arabic*.]
	\item single precision: $y=1.172603\dots$;
	\item double precision: $y=1.1726039400531\dots$;
	\item extended precision: $y=1.172603940053178\dots$.
\end{enumerate}

All results lead any user to conclude that IBM S/370 returned the
correct result. However this result is WRONG and the correct result lies in the interval: $-0.82739605994682135\pm 5\times 10^{-17}$. Note that \textit{even the sign is wrong!}

One of the proposals to overcome this problem is due, almost simultaneously, to Ramon Moore \cite{Moore1959,MoorePhD} and Teruo Sunaga \cite{Sunaga1958}. They developed the so-called \textbf{interval arithmetic}. Interval arithmetic is a set of operations on the set of all closed intervals $\mathbb{I}(\mathbb{R})=\{[a,b]:a,b\in\mathbb{R} \text{ and } a\leq b \}$. The operations are defined in the following way:

\begin{enumerate}[labelindent=\parindent, leftmargin=*,label=\arabic*.]
	\item $[a,b]+[c,d]=[a+c,b+d]$,
	\item $[a,b]\cdot [c,d]=[\min P,\max P]$ --- where $P=\{a\cdot c,a\cdot d,b\cdot c,b\cdot d\}$,
	\item $[a,b]-[c,d]=[a-d,b-c]$,
	\item $[a,b]/[c,d]=[a,b]\cdot ([1/ b,1/ a])$; provided that $0\notin [c,d]$.
\end{enumerate}

Observe that for each operation $\diamond\in\{+,-,\cdot,/\}$, $[a,b]\diamond [c,d]=\{x\diamond y\in\RR:x\in [a,b]\wedge y\in [c,d]\}$. 
This reveals two important properties of this arithmetic (a) \textbf{Correctness} and (b) \textbf{Optimality}. 

\begin{quotation}
	``\textit{Correctness}. The criterion for correctness of a definition of interval arithmetic is that the 
	“Fundamental Theorem of Interval Arithmetic” holds \footnote{Moore \cite[Theorem 3.1, p. 21]{Moo79}: 
		If $F$ is an inclusion monotonic interval extension of $f$, then $\stackrel{\raw}{f}(X_1, . . . , X_n)\subseteq F(X_1, . . . , X_n)$; 
		where $\stackrel{\raw}{f}(X_1, . . . , Xn)=\{f(x_1,\dots,x_n):x_i\in X_i\}$.}: when an expression is evaluated using intervals, it yields 
	an interval containing all results of pointwise evaluations based on point values that are elements of the argument intervals. 
	
	[\dots]
	
	\textit{Optimality}. By optimality, we mean that the computed floating-point interval is not wider than necessary.'' 
	
	\hfill Hickey \textit{et.al}\cite[p.1040]{Hic01}
\end{quotation}

The philosophy behind intervals is the following: Enclosure in intervals the values which are not exact by any reason (e.g. the value comes from  an imprecise measurement) and apply correct and optimal operations on such intervals in order to obtain the best interval which contains the desired output.  This approach will avoid what happened with the Rump's example. Therefore, the notion of correctness is indispensable for such philosophy.

The property of correctness was investigated in 2006 by Santiago \textit{et al} \cite{BSA2013,SBA06}. In those papers, instead of correctness the authors 
used the term \textbf{representation}, since an interval computation could be understood not just as a machine representation of real numbers, but also as a 
mathematical representation of real numbers (this idea is confirmed by the Representation Theorems of Euclidean continuous functions in \cite{BSA2013,SBA06}). In what follows this notion is shown for binary operations: A binary interval operation, $\diamond$, \textit{represents} 
a binary real operation, $\diamond$, whenever:

\begin{equation}
(x,y)\in [a,b]\times [c,d]\Rightarrow x\diamond y\in [a,b]\diamond [c,d]
\end{equation}

This can be easily extended to $n$-ary operations. The authors showed that this notion is more general than what is stated by the  Fundamental Theorem of 
Interval Arithmetic; given that there are representations which are not inclusion monotonic (see \cite[p. 238]{SBA06}).

One noteworthy point which will be taken into account in the present paper: There is a difference between the  representation of a function $f$ as an interval 
function $F$ and an \textbf{extension} of a function $f$ to an interval function $G$. For example, given intervals $X=[\underline{X},\overline{X}]$ 
and $Y=[\underline{Y},\overline{Y}]$, the function 
$X-Y=[\min(\underline{X}-\underline{Y},\overline{X}-\overline{Y}),\max(\underline{X}-\underline{Y},\overline{X}-\overline{Y})]$, presented in 
\cite{Markov77}, extends the subtraction on real numbers, however $[2,3]-[2,3]=[0,0]$, $2.5,2.1\in [2,3]$, but $2.5-2.1\not\in [2,3]-[2,3]$; in 
other words, this operation is not correct. So, there are interval extensions which are not correct. \textit{They are useless for the proposed philosophy}.

The process of giving the correct and optimal interval version $F$ for a function $f$ is called: ``\textbf{intervalization}''.  There are many proposal of  intervalization of algebraic structures further than that of real numbers proposed by Moore and Sunaga. In the literature, the reader can find proposals even for the field of Logic, since there are structures which interpret logics that are susceptible to the same situation of $\mathbb{I}(\mathbb{R})$. For example: The \L ukasiewicz implicative algebra  $\prt{[0,1],\raw_{\tiny LK},0,1}$ s.t. $x\rightarrow_{\tiny LK} y=\min (1,1-x+y)$ interprets some many-valued logics and was ``\textbf{intervalized}'' by Bedregal and Santiago in  \cite{BS2013a}. Its MV-algebra counterpart was intervalized by Cabrer et al in \cite{Cabrer2014}, also,  in order to overcome the same problems already stated for $\mathbb{I}(\mathbb{R})$. In both cases, the interval algebras did not satisfy the same properties that are satisfied by the algebras that they came from. The same happened with $\mathbb{I}(\mathbb{R})$!

%
%

The following section a way of ``\textbf{intervalizing'}' BCI-algebras is provided. Like the case of MVs and \L ukasiewicz algebras the resulting structure does not belong to the same category of its starting algebra. This paper we provide an investigation of the resulting structures. In order to achieve that,  some required concepts, like the abstract notion of \textbf{intervals} are introduced. The aim, again, is to provide the ability to use intervals to represent the elements of  an algebra $\prt{A,\rightarrow}$.

\begin{definition}[Abstract Intervals]\label{DefAbsInterv}
	Given a poset $\prt{A,\leq}$, the set  $[a,b]=\{x\in A:a\leq x\leq b\}$ is called \textbf{the closed interal with endpoints $a$ and $b$} and 	$\mathbb{A}=\{[a,b]\in A\times A: a\leq b\}$ is the \textbf{set of all intervals of elements of $A$}. For any, $X\in \mathbb{A}$ its  
	left and right endpoints by $\underline{X}$ and $\overline{X}$, respectively, i.e. if $X=[a,b]$ then $\underline{X}=a$ and 
	$\overline{X}=b$. When $\underline{X}=\overline{X}$ the interval is called \textbf{degenerate}. The embedding $i:A\rightarrow\mathbb{A}$, s.t. $i(a)=[a,a]$ is called \textbf{natural embedding}. On the set $\mathbb{A}$ it is 
	canonical to define the partial order: $X\leq_{km} Y$ if and only if $\underline{X}\leq\underline{Y}$ and $\overline{X}\leq\overline{Y}$. This 
	relation is called \textbf{pointwise} or \textbf{Kulisch-Miranker order}.
\end{definition}

Since BCI-algebras are partially ordered systems, $\prt{B,\leq}$, it is possible to apply Definition \ref{DefAbsInterv} to  obtain the partial order $\prt{\mathbb{B},\leq_{km}}$.  The question is about the implications on $\mathbb{B}$: If an interval operation on $\mathbb{B}$ satisfies the BCI axioms, is it also correct? The following section will show that the answer is negative. But what does it mean? It means it is not possible to have both: (1) correctness and (2) the known theory of BCI-algebras for $\prt{\mathbb{B},\leq_{km}}$. So, since  correctness is indispensable, a price must be paid: \textit{A new  theory for $\prt{\mathbb{B},\leq_{km}}$} must be developed. This is the reason of this paper!

\section{Intervalization of BCI-algebras}\label{Sec-ISBCIA}

This section shows that it is not possible to have an interval BCI-algebra with a correct implication. Proposition \ref{PropInervalBCI} shows that  it is possible to build an interval BCI-algebra, but with a non-correct implication, and Theorem \ref{TheoremIntervalBCInotRep} shows that it is an impossible task. Finally, we provide the ``BCI-algebra intervalization theorem'' and some properties of resulting algebra.

\begin{lemma}\label{lem-BCI-ord}
	Let $\prt{A,\rightarrow,\top}$ be a BCI-algebra such that $\prt{A,\preceq}$ is a meet-semilattice $\prt{A,\wedge}$. For each 
	$a,b,c\in A$, $a\rightarrow (b\wedge c)=\top$ iff $a\rightarrow b=\top$ and $a\rightarrow c=\top$. Moreover, if 
	$a\rightarrow c=\top$ and $b\rightarrow c=\top$ then $(a\wedge b)\rightarrow c=\top$.
\end{lemma}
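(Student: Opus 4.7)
The plan is to translate every statement of the form $x \rightarrow y = \top$ into the corresponding order statement $x \preceq y$ via property \ref{PropC5}, and then use the universal property of the meet in $\langle A, \wedge \rangle$. Because the lemma is really a reformulation of the defining characterization of greatest lower bounds, no deep BCI machinery will be required: the only BCI facts I need are \ref{PropC5} (the order--implication correspondence) and \ref{BCI-Trans} (transitivity of $\preceq$).

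For the biconditional, I would argue as follows. By \ref{PropC5}, $a \rightarrow (b \wedge c) = \top$ is equivalent to $a \preceq b \wedge c$. Since $b \wedge c$ is, by hypothesis, the greatest lower bound of $b$ and $c$ in $\langle A, \preceq \rangle$, the statement $a \preceq b \wedge c$ is equivalent to the conjunction $a \preceq b$ and $a \preceq c$. Applying \ref{PropC5} once more in the opposite direction, this is the same as $a \rightarrow b = \top$ and $a \rightarrow c = \top$, which establishes the equivalence.

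For the ``moreover'' part, suppose $a \rightarrow c = \top$ and $b \rightarrow c = \top$. By \ref{PropC5}, this gives $a \preceq c$ and $b \preceq c$. Since $a \wedge b$ is a lower bound of $\{a,b\}$, we have $a \wedge b \preceq a$, and combining this with $a \preceq c$ via transitivity \ref{BCI-Trans} yields $a \wedge b \preceq c$. Translating back through \ref{PropC5} gives $(a \wedge b) \rightarrow c = \top$, as required.

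The main (and only) obstacle is purely notational: keeping the direction of the equivalence straight when flipping between the implicative presentation and the order-theoretic presentation of the BCI-algebra. Once that translation is in place, the argument is essentially the standard semilattice fact that ``meet is characterized by its universal lower-bound property,'' with transitivity of $\preceq$ handling the final implication.
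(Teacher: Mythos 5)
Your proof is correct, and it is exactly the ``straightforward'' argument the paper has in mind (the authors give no details, writing only \emph{Straightforward}): translate each $x\rightarrow y=\top$ into $x\preceq y$ via the order--implication correspondence, then invoke the universal property of the meet, with transitivity of $\preceq$ closing the final step. Nothing is missing and no different route is taken.
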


\begin{proof}
	Straightforward.
	
\end{proof}

\begin{lemma}\label{lem-BCI-ord2}
	Let $\prt{A,\rightarrow,\top}$ be  a BCI-algebra such that $\prt{A,\preceq}$ is a  meet-semilattice satisfying: 
	\begin{equation}\label{eq-(*)} 
	a\preceq b\rightarrow c\mbox{ iff }a\wedge b\preceq c,
	\end{equation}
	for every $a,b,c\in A$. For each $a,b,c,d,e\in A$, if $a\rightarrow ((b\rightarrow c)\wedge (d\rightarrow e))=\top$ then  
	$a\rightarrow ((b\wedge d)\rightarrow (c\wedge e))=\top$.
\end{lemma}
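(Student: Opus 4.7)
The plan is to chain together the adjunction property (\ref{eq-(*)}) with Lemma \ref{lem-BCI-ord}, translating the implicative hypothesis into an inequality in the semilattice, slicing it, and then translating back. Let me spell this out.

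First I would rewrite the desired conclusion. Using (\ref{eq-(*)}) twice, the statement $a \rightarrow ((b\wedge d)\rightarrow (c\wedge e))=\top$, i.e.\ $a\preceq (b\wedge d)\rightarrow (c\wedge e)$, is equivalent to $a\wedge (b\wedge d)\preceq c\wedge e$, so by associativity of the semilattice meet the goal reduces to showing $a\wedge b\wedge d\preceq c\wedge e$.

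Next I would unpack the hypothesis. By Lemma \ref{lem-BCI-ord} applied to $b\rightarrow c$ and $d\rightarrow e$, from $a\rightarrow ((b\rightarrow c)\wedge (d\rightarrow e))=\top$ we obtain both $a\rightarrow (b\rightarrow c)=\top$ and $a\rightarrow (d\rightarrow e)=\top$, i.e.\ $a\preceq b\rightarrow c$ and $a\preceq d\rightarrow e$. Applying (\ref{eq-(*)}) to each gives $a\wedge b\preceq c$ and $a\wedge d\preceq e$.

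Finally, since $a\wedge b\wedge d\preceq a\wedge b\preceq c$ and $a\wedge b\wedge d\preceq a\wedge d\preceq e$ (using that the meet is below each of its arguments and that $\preceq$ is transitive, i.e.\ \ref{BCI-Trans}), the universal property of the meet yields $a\wedge b\wedge d\preceq c\wedge e$, which is exactly what we reduced the goal to. I do not foresee a real obstacle here; the only subtlety is keeping track that (\ref{eq-(*)}) is being used both to unpack the hypothesis (direction $a\preceq b\rightarrow c\Rightarrow a\wedge b\preceq c$) and to repackage the conclusion (the converse direction), and that Lemma \ref{lem-BCI-ord} is what lets us split the meet inside the implication's consequent.
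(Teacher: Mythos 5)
Your proposal is correct and follows essentially the same route as the paper's proof: split the hypothesis with Lemma \ref{lem-BCI-ord}, translate to $a\wedge b\preceq c$ and $a\wedge d\preceq e$ via (\ref{eq-(*)}), deduce $(a\wedge b\wedge d)\preceq c\wedge e$, and repackage with (\ref{eq-(*)}) and \ref{PropC5}. Your middle step is in fact a slight streamlining --- you use monotonicity of the meet and its universal property directly, where the paper takes a detour through $b\preceq a\rightarrow c$ and $d\preceq a\rightarrow e$ by a second application of the adjunction --- but the underlying argument is the same.
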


\begin{proof}
	If $a\rightarrow ((b\rightarrow c)\wedge (d\rightarrow e))=\top$ then, by Lemma \ref{lem-BCI-ord} and \ref{PropC5},   
	$a\preceq b\rightarrow c$ and $a\preceq  d\rightarrow e$. By (\ref{eq-(*)}),  $a\wedge b\preceq c$ and $a\wedge d\preceq e$ and therefore, 
	$b\preceq a\rightarrow c$ and  $d\preceq a\rightarrow e$. So,   $b\wedge d\preceq a\rightarrow c$ and 
	$b\wedge d\preceq a\rightarrow e$. Thus, applying again (\ref{eq-(*)}), 
	$(b\wedge d)\wedge a\preceq c$ and $(b\wedge d)\wedge a\preceq e$. Hence, $(b\wedge d)\wedge a\preceq c\wedge e$. So, by (\ref{eq-(*)}) and 
	\ref{PropC5}, $a\rightarrow ((b\wedge d)\rightarrow (c\wedge e))=\top$.
	
\end{proof} 	

\begin{proposition}\label{PropInervalBCI}
	Let $\prt{A,\rightarrow,\top}$ be a BCI-Algebra such that $\prt{A,\preceq}$ is a meet-semilattice satisfying $(\ref{eq-(*)})$. Then  
	$\prt{\mathbb{A}, \Mapsto,[\top,\top]}$, where
	\begin{equation}\label{eq-mapsto}
	X\Mapsto Y=[(\underline{X}\rightarrow \underline{Y}) \wedge(\overline{X}\rightarrow \overline{Y}), \overline{X}\rightarrow \overline{Y}],
	\end{equation} is also a BCI-algebra which satisfies $(\ref{eq-(*)})$.
\end{proposition}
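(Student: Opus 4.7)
The proof proceeds by exploiting an asymmetry built into the definition of $\Mapsto$: for any $X, Y \in \mathbb{A}$, the right endpoint $\overline{X \Mapsto Y}$ equals $\overline{X} \rightarrow \overline{Y}$, i.e., the underlying BCI operation applied to right endpoints alone. A routine induction then shows that, for any interval term built from $\Mapsto$, its right endpoint is the same term evaluated in $A$ using $\rightarrow$ on the right endpoints of the input intervals. Hence, whenever an axiom in $\mathbb{A}$ asserts $t^{\mathbb{A}} = [\top, \top]$, the right-endpoint identity holds automatically from the corresponding BCI axiom in $A$. Moreover, the left endpoint of $X \Mapsto Y$ is a meet whose second factor coincides with its right endpoint; by \ref{BCI-Pr1}, $p \wedge q = \top$ forces $p = q = \top$, so verifying each axiom reduces to establishing one additional inequality between the left endpoints of the immediate subterms of the outermost $\Mapsto$.

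Axioms \ref{Curry-I} and \ref{PropC4} drop out at once: the former from $\underline{X} \rightarrow \underline{X} = \overline{X} \rightarrow \overline{X} = \top$, and the latter by extracting $\underline{X} \rightarrow \underline{Y} = \underline{Y} \rightarrow \underline{X} = \top$ (and analogously for overlines) from $X \Mapsto Y = Y \Mapsto X = [\top, \top]$, then invoking \ref{PropC4} in $A$ on each pair of endpoints. For \ref{PropC2}, setting $U = X \Mapsto Y$, the outstanding inequality is $\underline{X} \preceq (\underline{U} \rightarrow \underline{Y}) \wedge (\overline{U} \rightarrow \overline{Y})$; Lemma \ref{lem-BCI-ord} splits this into two statements of the form $\underline{X} \preceq (p \rightarrow q) \rightarrow q$, each of which follows from \ref{PropC2} combined with first-place antitonicity \ref{BCI-Pr2} (using $\underline{U} \preceq \underline{X} \rightarrow \underline{Y}$) and the chain $\underline{X} \preceq \overline{X} \preceq (\overline{X} \rightarrow \overline{Y}) \rightarrow \overline{Y}$.

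The main obstacle is \ref{PropC1}. Writing $U = X_2 \Mapsto X_3$, $V = X_3 \Mapsto X_1$, $W = X_2 \Mapsto X_1$, one needs $\underline{U} \preceq \underline{V \Mapsto W}$. Decomposing via Lemma \ref{lem-BCI-ord}, the non-trivial part is $\underline{U} \preceq \underline{V} \rightarrow \underline{W}$; applying (\ref{eq-(*)}) and then Lemma \ref{lem-BCI-ord} once more reduces this to showing $\underline{U} \wedge \underline{V} \preceq \underline{X_2} \rightarrow \underline{X_1}$ and $\underline{U} \wedge \underline{V} \preceq \overline{X_2} \rightarrow \overline{X_1}$. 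Each is an instance of the composition inequality $(p \rightarrow q) \wedge (q \rightarrow r) \preceq p \rightarrow r$ — a short consequence of (\ref{eq-(*)}) and modus ponens $(p \rightarrow q) \wedge p \preceq q$ — applied to the matching endpoint factors of $\underline{U}$ and $\underline{V}$. The verification of (\ref{eq-(*)}) on $\mathbb{A}$ is then straightforward: unpacking $X \leq_{km} Y \Mapsto Z$ gives three endpointwise inequalities (two from the left-endpoint condition, split via Lemma \ref{lem-BCI-ord}, and one from the right), of which $\underline{X} \preceq \overline{Y} \rightarrow \overline{Z}$ is absorbed using $\underline{X} \preceq \overline{X}$; the remaining two are precisely the endpointwise instances of (\ref{eq-(*)}) in $A$, componentwise equivalent to $X \wedge_{km} Y \leq_{km} Z$ under the componentwise meet on $\mathbb{A}$ (which is well-defined since meet is monotone).
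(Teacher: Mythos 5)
Your proof is correct, and it shares the paper's overall skeleton --- endpointwise verification, splitting meets via Lemma \ref{lem-BCI-ord}, and the observation (via \ref{BCI-Pr1}) that $p\wedge q=\top$ forces $p=q=\top$ --- but it organizes the hardest case genuinely differently. Your treatments of \ref{Curry-I}, \ref{PropC4} and \ref{PropC2} essentially reproduce the paper's (indeed, the paper's own step marked (\#\#) silently needs your explicit chain $\underline{X}\preceq\overline{X}\preceq(\overline{X}\rightarrow\overline{Y})\rightarrow\overline{Y}$ to pass from $\overline{X}\rightarrow(\overline{U}\rightarrow\overline{Y})=\top$ to $\underline{X}\rightarrow(\overline{U}\rightarrow\overline{Y})=\top$, so your explicitness is a small improvement). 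The divergence is at \ref{PropC1}: the paper's key tool is Lemma \ref{lem-BCI-ord2}, which converts $a\rightarrow((b\rightarrow c)\wedge(d\rightarrow e))=\top$ into $a\rightarrow((b\wedge d)\rightarrow(c\wedge e))=\top$, and it assembles the left-endpoint identity by combining endpointwise instances of \ref{PropC1} with first-place antitonicity \ref{BCI-Pr2}; you instead read (\ref{eq-(*)}) as a residuation law, derive modus ponens $(p\rightarrow q)\wedge p\preceq q$ and the composition inequality $(p\rightarrow q)\wedge(q\rightarrow r)\preceq p\rightarrow r$, and apply the latter to the matching meet factors of $\underline{U}\wedge\underline{V}$, bypassing Lemma \ref{lem-BCI-ord2} entirely --- a legitimate trade, since Lemma \ref{lem-BCI-ord2} is itself proved from (\ref{eq-(*)}) by exactly this kind of shuffling. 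Your two general devices --- that the right-endpoint projection $X\mapsto\overline{X}$ is a homomorphism for $\Mapsto$ by (\ref{eq-mapsto}), making every right-endpoint identity automatic, and that each axiom then reduces to a single inequality $\underline{U}\preceq\underline{V}$ between left endpoints of the immediate subterms --- are only implicit in the paper (its starred steps) and make your verification more systematic; what the paper's route buys is a reusable named lemma, whereas your composition law is the more recognizable residuated-structure fact. Two points you compress are genuinely routine but deserve a line in a final write-up: the second conjunct $\underline{U}\preceq\overline{V}\rightarrow\overline{W}$ in \ref{PropC1} (a meet factor followed by \ref{PropC1} in order form), and the explicit identification of the order induced by $\Mapsto$ with $\leq_{km}$, which your \ref{PropC4} extraction already yields and which your closing verification of (\ref{eq-(*)}) on $\mathbb{A}$ presupposes.
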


\begin{proof}
	Notice that in this case, defining  $X\preceq Y$ iff $X\Mapsto Y=[\top,\top]$, then  $X\preceq Y$  iff $\underline{X}\rightarrow \underline{Y}=\overline{X}\rightarrow \overline{Y}=\top$ iff $\underline{X}\preceq \underline{Y}$ and $\overline{X}\preceq \overline{Y}$.
	
	Thus, clearly, the properties \ref{Curry-I} to \ref{PropC5} are trivially satisfied. In the following,  \ref{PropC1} is proved.

	Since, $\prt{A,\rightarrow,\top}$ is a BCI-Algebra, by \ref{PropC1},  each $X,Y,Z\in \mathbb{A}$, 
	$(\underline{Y}\rightarrow \underline{Z})\rightarrow ((\underline{Z}\rightarrow \underline{X})\rightarrow 
	(\underline{Y}\rightarrow \underline{X}))=\top$ and  $(\overline{Y}\rightarrow \overline{Z})\rightarrow 
	((\overline{Z}\rightarrow \overline{X})\rightarrow (\overline{Y}\rightarrow \overline{X}))=\top$. Then,   
	$((\underline{Y}\rightarrow \underline{Z})\wedge (\overline{Y}\rightarrow \overline{Z}))\rightarrow ((\underline{Z}\rightarrow \underline{X})\rightarrow (\underline{Y}\rightarrow \underline{X}))=\top$ and  $((\underline{Y}\rightarrow \underline{Z})\wedge(\overline{Y}\rightarrow \overline{Z}))\rightarrow ((\overline{Z}\rightarrow \overline{X})\rightarrow (\overline{Y}\rightarrow \overline{X}))=\top$. So, by Lemma \ref{lem-BCI-ord}, $((\underline{Y}\rightarrow \underline{Z})\wedge (\overline{Y}\rightarrow \overline{Z}))\rightarrow ( ((\underline{Z}\rightarrow \underline{X})\rightarrow (\underline{Y}\rightarrow \underline{X}))\wedge ((\overline{Z}\rightarrow \overline{X})\rightarrow (\overline{Y}\rightarrow \overline{X})))=\top$.  Thus, by Lemma \ref{lem-BCI-ord2} and Eq. (\ref{eq-mapsto}), $\underline{Y\Mapsto Z}\rightarrow ( ((\underline{Z}\rightarrow \underline{X})\wedge (\overline{Z}\rightarrow \overline{X}))\rightarrow ((\underline{Y}\rightarrow \underline{X})\wedge (\overline{Y}\rightarrow \overline{X})))=\top$. Therefore by Eq. (\ref{eq-mapsto}), (*) $\underline{Y\Mapsto Z}\rightarrow (\underline{Z\Mapsto X}\rightarrow \underline{Y\Mapsto X})=\top$. On the other hand, by (C-1), $(\overline{Y}\rightarrow \overline{Z}) \rightarrow ((\overline{Z}\rightarrow \overline{X})\rightarrow(\overline{Y}\rightarrow\overline{X}))=\top$ and so, by Eq. (\ref{eq-mapsto}): (**)
	$\overline{Y\Mapsto Z}\rightarrow (\overline{Z\Mapsto X}\rightarrow \overline{Y\Mapsto X})=\top$. Thus, from 
	(*) and (**) and Lemma \ref{lem-BCI-ord},  $(\underline{Y\Mapsto Z}\wedge\overline{Y\Mapsto Z})\rightarrow (  (\underline{Z\Mapsto X}\rightarrow \underline{Y\Mapsto X})\wedge(\overline{Z\Mapsto X}\rightarrow \overline{Y\Mapsto X}))=\top$. Since, by Eq. (\ref{eq-mapsto}), $(\underline{Y\Mapsto Z}\wedge\overline{Y\Mapsto Z})=\underline{Y\Mapsto Z}$, then (***) $\underline{Y\Mapsto Z}\rightarrow (  (\underline{Z\Mapsto X}\rightarrow \underline{Y\Mapsto X})\wedge(\overline{Z\Mapsto X}\rightarrow \overline{Y\Mapsto X}))=\top$.  Thus, from (***) and (**),  $(\underline{Y\Mapsto Z}\rightarrow   \underline{(Z\Mapsto X)\Mapsto (Y\Mapsto X)})\wedge(\overline{Y\Mapsto Z}\rightarrow \overline{(Z\Mapsto X)\Mapsto (Y\Mapsto X)})=\top$ and $\overline{Y\Mapsto Z}\rightarrow \overline{(Z\Mapsto X)\Mapsto (Y\Mapsto X)}=\top$. Therefore, 
	by Eq. (\ref{eq-mapsto}), $(Y\Mapsto Z)\Mapsto((Z\Mapsto X)\Mapsto (Y\Mapsto X))=[\top,\top]$.

	\ref{PropC2}: Clearly, $\underline{X}\rightarrow \underline{Y} \geq ((\underline{X}\rightarrow \underline{Y}) 
	\wedge (\overline{X}\rightarrow \overline{Y}))$ and therefore, by (A-2), $((\underline{X}\rightarrow \underline{Y}) \wedge 
	(\overline{X}\rightarrow \overline{Y}))  \rightarrow \underline{Y} \geq (\underline{X}\rightarrow \underline{Y}) \rightarrow 
	\underline{Y}$. So, by \ref{BCI-Pr7} and \ref{PropC2},  $\underline{X}\rightarrow (((\underline{X}\rightarrow \underline{Y}) \wedge 
	(\overline{X}\rightarrow \overline{Y}))  \rightarrow \underline{Y})\geq \underline{X}\rightarrow ((\underline{X}\rightarrow 
	\underline{Y}) \rightarrow \underline{Y}) =\top$. So, by Eq. (\ref{eq-mapsto}) and \ref{BCI-Pr1},  $\underline{X}\rightarrow 
	(\underline{(X\Mapsto Y)}\rightarrow \underline{Y}) =\top$. On the other hand, by Eq. (\ref{eq-mapsto}) and \ref{PropC2},  (\#) 
	$\overline{X}\rightarrow (\overline{(X\Mapsto Y)}\rightarrow \overline{Y}) =\top$. Therefore, (\#\#) $(\underline{X}\rightarrow 
	(\underline{(X\Mapsto Y)}\rightarrow \underline{Y}))\wedge (\overline{X}\rightarrow (\overline{(X\Mapsto Y)}\rightarrow 
	\overline{Y})) =\top$. Hence, from (\#\#), (\#) and Eq. (\ref{eq-mapsto}): $\underline{X}\rightarrow 
	\underline{(X\Mapsto Y)\Mapsto Y}=\top$ and $\overline{X}\rightarrow \overline{(X\Mapsto Y)\Mapsto Y}=\top$. Consequently, $	
	[\underline{X}\rightarrow \underline{(X\Mapsto Y)\Mapsto Y} \wedge \overline{X}\rightarrow \overline{(X\Mapsto Y)\Mapsto Y},  
	\overline{X}\rightarrow \overline{(X\Mapsto Y)\Mapsto Y}]= [ \top, \top ]$. Therefore, by Eq. (\ref{eq-mapsto}), $X\Mapsto ((X\Mapsto Y)\Mapsto Y) = [ \top, \top]$.
	
	\underline{$\prt{\mathbb{A}, \Mapsto,[\top,\top]}$ is a meet-semilattice}. In fact, let $X,Y,Z\in \mathbb{A}$. Then, $X\Mapsto Y\wedge Z=[\top,\top]$ iff 
	$(\underline{X}\rightarrow(\underline{Y}\wedge \underline{Z}))\wedge (\overline{X}\rightarrow(\overline{Y}\wedge \overline{Z}))=\top$ and 
	$\overline{X}\rightarrow(\overline{Y}\wedge \overline{Z})=\top$ iff $\underline{X}\rightarrow\underline{Y}=\top$, 
	$\underline{X}\rightarrow\underline{Z}=\top$, $\overline{X}\rightarrow\overline{Y}=\top$ and $\overline{X}\rightarrow\overline{Z}=\top$ iff 
	$X\Mapsto Y=[\top,\top]$ and  $X\Mapsto Z=[\top,\top]$. \\
	In  addition, $X\preceq Y\Mapsto Z$ iff $\underline{X}\preceq (\underline{Y}\rightarrow \underline{Z})\wedge 
	(\overline{Y}\rightarrow \overline{Z})$ and $\overline{X}\preceq\overline{Y}\rightarrow \overline{Z}$ iff
	$\underline{X}\preceq \underline{Y}\rightarrow \underline{Z}$ and $\overline{X}\preceq \overline{Y}\rightarrow \overline{Z}$ iff
	$\underline{X}\wedge\underline{Y}\preceq\underline{Z}$ and $\overline{X}\wedge \overline{Y}\preceq \overline{Z}$ iff 
	$X\wedge Y \preceq  Z$. Therefore, $\prt{\mathbb{A}, \Mapsto,[\top,\top]}$ satisfies (\ref{eq-(*)}). 
	
\end{proof}

\vspace{1em}

If $A$ has two different elements, say $a$ and $b$, such that $a\rightarrow b=\top$, i.e. $a\preceq b$, then \textbf{$\Mapsto$ is not an interval representation of $\rightarrow$}. In particular, $[a,b]\Mapsto [a,b]=[\top,\top]$. Nevertheless, by \ref{PropC4}, $b\rightarrow a\neq \top$ and so $b\rightarrow a\not\in [a,b]\Mapsto [a,b]$. This leads us to the following \textit{general} theorem:

\begin{theorem}\label{TheoremIntervalBCInotRep}
	Let $\prt{A,\rightarrow,\top}$ be a BCI-Algebra. If there are $a,b\in A$ such that $a\neq b$ and $a\rightarrow b=\top$, then for any interval $\downmodels\in \mathbb{A}$ there is no interval representation  $\rightarrowtail$ for $\rightarrow$ such that  $\prt{\mathbb{A},\rightarrowtail,\downmodels}$ is a BCI-algebra.
\end{theorem}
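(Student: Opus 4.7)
The plan is to argue by contradiction: assume there exist $\downmodels\in\mathbb{A}$ and an interval representation $\rightarrowtail$ of $\rightarrow$ for which $\langle\mathbb{A},\rightarrowtail,\downmodels\rangle$ is a BCI-algebra, and then derive a pointwise equation in the underlying algebra that forces $b\rightarrow a=\top$. Together with the hypothesis $a\rightarrow b=\top$, axiom \ref{PropC4} of the original BCI then yields $a=b$, contradicting $a\neq b$. The key is to combine a non-degenerate and a degenerate interval in order to turn the set-containment of representation into a sharp equality.

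The first step uses axiom \ref{Curry-I} in the hypothetical interval BCI, which reads $X\rightarrowtail X=\downmodels$ for every $X\in\mathbb{A}$. Since $a\rightarrow b=\top$ gives $a\preceq b$, the interval $X=[a,b]$ is well-formed and contains both $a$ and $b$. Hence $[a,b]\rightarrowtail[a,b]=\downmodels$, and representation forces every pointwise value coming from $[a,b]\times[a,b]$ to lie in $\downmodels$; in particular, $b\rightarrow a\in\downmodels$.

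The second step uses the derived left-neutrality property \ref{BCI-Pr8}, which holds in any BCI-algebra and therefore, by assumption, in $\langle\mathbb{A},\rightarrowtail,\downmodels\rangle$ as well: $\downmodels\rightarrowtail X=X$ for every $X$. Taking $X=[a,a]$ yields $\downmodels\rightarrowtail[a,a]=[a,a]$. Applying representation to $b\rightarrow a\in\downmodels$ and $a\in[a,a]$ pins down $(b\rightarrow a)\rightarrow a\in[a,a]=\{a\}$, i.e., the sharp equation $(b\rightarrow a)\rightarrow a=a$ in the original BCI.

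To close, property \ref{BCI-Pr9} applied in the original BCI with $y=b$ and $x=a$ reads $((b\rightarrow a)\rightarrow a)\rightarrow a=b\rightarrow a$. Substituting the equation just obtained, the left-hand side collapses to $a\rightarrow a$, which equals $\top$ by \ref{Curry-I}; hence $b\rightarrow a=\top$, producing via \ref{PropC4} the promised contradiction. The only delicate choice is which intervals to feed into the two ``$\downmodels$-producing'' identities: the non-degenerate $[a,b]$ is what pushes the non-trivial value $b\rightarrow a$ into $\downmodels$, while the degenerate $[a,a]$ is what tightens the representation inclusion to a pointwise equality, and seeing that these two together are enough is the main conceptual obstacle.
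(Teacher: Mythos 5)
Your proof is correct, and it takes a genuinely different route from the paper's. The paper argues by cases on the shape of $\downmodels$: if $\top\notin\downmodels$, then \ref{Curry-I} at the degenerate interval $[a,a]$ forces $a\rightarrow a=\top\in[a,a]\rightarrowtail[a,a]=\downmodels$, a contradiction; if $\downmodels=[\top,\top]$, then \ref{Curry-I} at $[a,b]$ together with antisymmetry \ref{PropC4} gives $b\rightarrow a\neq\top$ yet $b\rightarrow a\in[a,b]\rightarrowtail[a,b]=[\top,\top]$; and if $\downmodels=[\alpha,\top]$ with $\alpha\neq\top$ (the only remaining shape, since $\top$ is maximal by \ref{BCI-Pr1}), it invokes left neutrality \ref{BCI-Pr8} at the target $[\top,\top]$. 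You instead give a single uniform derivation valid for every $\downmodels$: \ref{Curry-I} at $[a,b]$ puts $b\rightarrow a$ into $\downmodels$; left neutrality \ref{BCI-Pr8} aimed at the degenerate target $[a,a]$ tightens the representation inclusion into the sharp pointwise equation $(b\rightarrow a)\rightarrow a=a$ (membership in $[a,a]$ means equality with $a$ by antisymmetry); and \ref{BCI-Pr9} then yields $b\rightarrow a=\top$, contradicting \ref{PropC4} given $a\rightarrow b=\top$ and $a\neq b$. Every step checks out, including well-formedness of $[a,b]$ from $a\preceq b$. What your route buys, besides avoiding the case split, is that it sidesteps a weak spot in the paper's third case: as printed, that case tests the value $\top\rightarrow\alpha$ against $[\alpha,\top]\rightarrowtail[\top,\top]$, but the pair $(\top,\alpha)$ does not belong to $[\alpha,\top]\times[\top,\top]$, so no violation of the representation condition is literally exhibited there (it is patchable, e.g.\ by taking the target $[\alpha,\alpha]$, where $\downmodels\rightarrowtail[\alpha,\alpha]=[\alpha,\alpha]$ and $\alpha\rightarrow\alpha=\top$ would force $\top=\alpha$). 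In exchange, the paper's argument uses only \ref{Curry-I}, \ref{PropC4} and \ref{BCI-Pr8}, while you additionally invoke the derived identity \ref{BCI-Pr9}; both are legitimate, but yours is cleaner in making no assumption whatsoever on the shape or position of $\downmodels$.
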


\begin{proof}
	Case $\top\not\in\downmodels$. Then $a\rightarrow a=\top\not\in\downmodels = [a,a]\rightarrowtail [a,a]$ and therefore   $\rightarrowtail$  is not an interval representation of $\rightarrow$.
	
	Case  $\downmodels= [\top,\top]$, then $[a,b]\rightarrowtail [a,b]=\downmodels = [\top,\top]$. Nevertheless, by \ref{PropC4}, $b\rightarrow a\neq \top$ and so $b\rightarrow a\not\in [a,b]\rightarrowtail [a,b]$. Therefore, in this case  $\rightarrowtail$ also is not an interval representation of $\rightarrow$. 
	
	Case $\downmodels =[\alpha,\top]$ for some $\alpha<\top$. Then $\top\rightarrow \alpha\neq \top$. However, if  $\prt{\mathbb{A},\rightarrowtail,\downmodels}$ is a BCI-algebra then, by (A-8), $\downmodels\rightarrowtail [\top,\top]=[\top,\top]$ and therefore, $\top\rightarrow \alpha\not\in [\alpha,\top]\rightarrowtail [\top,\top]$ which means that $\rightarrowtail$ again is not an interval representation of $\rightarrow$. 
	
\end{proof}

\vspace{1em}

In the following, we propose a process for intervalization of BCI-algebras.



\begin{theorem}\label{ThIntBCI}
	Let $\prt{A,\rightarrow,\top}$ be a BCI-algebra, $\prt{A,\preceq}$ be a meet semilattice, such that for each 
	$x,y,z\in A$, $x\rightarrow (y\wedge z)= x \rightarrow y \wedge x \rightarrow z$ and 
	$\mathbb{A}=\{[\underline{X},\overline{X}]: \underline{X},\overline{X}\in  A \mbox{ and } \underline{X}\preceq\overline{X}\}$. 
	For $X,Y\in\mathbb{A}$,  define:
	\begin{quote} 
		\begin{enumerate}
			\item $X\Twoheadrightarrow Y=[\overline{X}\rightarrow\underline{Y},\underline{X}\rightarrow\overline{Y}]$.
			\item $X\Rightarrow Y=[\underline{X}\rightarrow\underline{Y}\wedge \overline{X}\rightarrow\overline{Y},
			\underline{X}\rightarrow\overline{Y}].$
		\end{enumerate}
	\end{quote}
	
	Then $\Twoheadrightarrow $ is the best representation of $\rightarrow$ and the structure $\prt{\mathbb{A},\Twoheadrightarrow,\Rightarrow,[\top,\top]}$ satisfies: 
	\begin{enumerate}[labelindent=\parindent, leftmargin=*,label=\normalfont{(IBCI\arabic*)}]
		\item $X\Twoheadrightarrow (Y\Twoheadrightarrow Z) = Y\Twoheadrightarrow (X\Twoheadrightarrow Z)$, \label{IBCI1}
		\item $X\Rightarrow (Y\Rightarrow Z) = Y\Rightarrow (X\Rightarrow Z)$,\label{IBCI2}
		\item $X\Twoheadrightarrow Y \precsim (Z\Twoheadrightarrow X)\Rightarrow (Z\Twoheadrightarrow Y)$,\label{IBCI3}
		\item $[\top, \top] \Twoheadrightarrow X = X$,\label{IBCI4}
		\item $X\ll Y \precsim Z\Longrightarrow X\ll Z$,\label{IBCI5}
		\item $X\precsim Y \ll Z\Longrightarrow X\ll Z$,\label{IBCI6}
		\item $X\precsim Y$ e $Y\precsim X\Longrightarrow X = Y$,\label{IBCI7}				
	\end{enumerate}
	
	where $X\ll Y \Longleftrightarrow X \Twoheadrightarrow Y = [\top, \top]$ e $X \precsim Y \Longleftrightarrow X \Rightarrow Y = [\top, \top]$.
	However, when $A$ has at least one  element different from $\top$, then $\prt{\mathbb{A},\Twoheadrightarrow,[\top,\top]}$ is not a BCI-algebra.
\end{theorem}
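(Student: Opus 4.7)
The argument naturally splits into three stages, tackled in order: establishing that $\Twoheadrightarrow$ is the best interval representation of $\rightarrow$; verifying the seven axioms \ref{IBCI1}--\ref{IBCI7}; and finally exhibiting the failure of the BCI axioms for $\prt{\mathbb{A},\Twoheadrightarrow,[\top,\top]}$ whenever $A\neq\{\top\}$. For the representation, I would take arbitrary $x\in[\underline X,\overline X]$ and $y\in[\underline Y,\overline Y]$ and chain first-place antitonicity \ref{BCI-Pr2} with second-place isotonicity \ref{BCI-Pr7} to get $\overline X\rightarrow\underline Y\preceq x\rightarrow y\preceq\underline X\rightarrow\overline Y$; this single chain shows $X\Twoheadrightarrow Y$ is a well-formed interval containing every pointwise value, and the endpoints are attained at the corners $(\overline X,\underline Y)$ and $(\underline X,\overline Y)$, which is precisely optimality.

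Before attacking the axioms I would record two useful equivalences, each following from \ref{BCI-Pr1} together with the meet-semilattice assumption (which forces $a\wedge b=\top$ to imply $a=b=\top$): $X\ll Y$ iff $\overline X\preceq\underline Y$, and $X\precsim Y$ iff $\underline X\preceq\underline Y$ and $\overline X\preceq\overline Y$, i.e.~componentwise Kulisch--Miranker. With this dictionary, \ref{IBCI5}, \ref{IBCI6}, \ref{IBCI7} become one-line order arguments, \ref{IBCI4} is coordinatewise left neutrality \ref{BCI-Pr8}, and \ref{IBCI1} reduces coordinate by coordinate to exchange \ref{Curry-C}. Axiom \ref{IBCI2} is slightly heavier: I would expand each meet inside $X\Rightarrow(Y\Rightarrow Z)$ using the hypothesized distributivity $x\rightarrow(y\wedge z)=(x\rightarrow y)\wedge(x\rightarrow z)$, obtaining a symmetric triple meet in whose three summands \ref{Curry-C} can be applied termwise to match $Y\Rightarrow(X\Rightarrow Z)$ after reordering.

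The main technical obstacle is \ref{IBCI3}. To show that $\overline X\rightarrow\underline Y$ lies below each meet-summand of $\underline{(Z\Twoheadrightarrow X)\Rightarrow(Z\Twoheadrightarrow Y)}$, and that $\underline X\rightarrow\overline Y$ lies below the corresponding right endpoint, I would start each sub-inequality from an instance of \ref{Curry-B}/\ref{BCI-Pr6} of the shape $u\rightarrow v\preceq(z\rightarrow u)\rightarrow(z\rightarrow v)$, and then use monotonicity steps (\ref{BCI-Pr2}, \ref{BCI-Pr7}) to swap between $\underline Z$ and $\overline Z$ in the inner and outer positions. The delicate point is the bookkeeping of which endpoint of $Z$ is fed to which occurrence, so that every monotonicity step runs in the correct direction; once this is set up carefully, the inequalities follow by transitivity of $\preceq$.

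For the final assertion, pick any $a\in A$ with $a\neq\top$ and set $b:=a\wedge\top$. If $b=\top$ then $\top\preceq a$, which by \ref{BCI-Pr1} forces $a=\top$, a contradiction; hence $b\prec\top$ and $W:=[b,\top]\in\mathbb{A}$ is a genuine non-degenerate interval. Using \ref{BCI-Pr8} together with $b\preceq\top$, direct computation gives $W\Twoheadrightarrow W=[\top\rightarrow b,\,b\rightarrow\top]=[b,\top]\neq[\top,\top]$, which contradicts the BCI identity $W\rightarrow W=\top$ (the analog of \ref{Curry-I} transported to $\Twoheadrightarrow$) and rules out $\prt{\mathbb{A},\Twoheadrightarrow,[\top,\top]}$ from being a BCI-algebra.
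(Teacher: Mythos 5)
Your proposal is correct and follows essentially the same route as the paper's own proof: the same order dictionary ($X\ll Y$ iff $\overline{X}\preceq\underline{Y}$, with $\precsim$ the Kulisch--Miranker order, where the meet-semilattice fact $a\wedge b=\top\Rightarrow a=b=\top$ does the work), coordinatewise exchange for (IBCI1), meet-distributivity plus exchange and reordering for (IBCI2), the same chains built from first-place antitonicity, second-place isotonicity and $x\rightarrow y\preceq(z\rightarrow x)\rightarrow(z\rightarrow y)$ for (IBCI3), and the identical counterexample $[a\wedge\top,\top]$ violating $X\Twoheadrightarrow X=[\top,\top]$. The only divergence is that where the paper merely cites Proposition 4.4 of Bedregal and Santiago for $\Twoheadrightarrow$ being the best representation of $\rightarrow$, you prove it directly via the corner chain $\overline{X}\rightarrow\underline{Y}\preceq x\rightarrow y\preceq\underline{X}\rightarrow\overline{Y}$, a sound and self-contained substitute.
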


\begin{proof}   
	According to Proposition 4.4 at \cite{BS2013a} the operation $\Twoheadrightarrow $ is the best representation of $\rightarrow$.	Note that: 
	
	\begin{enumerate}
		\item $X\ll Y \Leftrightarrow X\Twoheadrightarrow Y=[\top,\top]\Leftrightarrow  \overline{X}\rightarrow \underline{Y}=\top$ and 
		$\underline{X}\rightarrow \overline{Y}=\top \Leftrightarrow \overline{X}\preceq\underline{Y}$ and $\underline{X}\preceq\overline{Y} 
		\Leftrightarrow \overline{X}\preceq\underline{Y}$.
		
		\item $X\precsim Y\Leftrightarrow X\Rightarrow Y=[\top,\top]\Leftrightarrow  
		\underline{X}\rightarrow \underline{Y}=\top$ and $\overline{X}\rightarrow \overline{Y}=\top$ and 
		$\underline{X}\rightarrow \overline{Y}=\top 
		\Leftrightarrow \underline{X}\preceq\underline{Y}$ and $\overline{X}\preceq\overline{Y}$ and 
		$\underline{X}\preceq\overline{Y} \Leftrightarrow \underline{X}\preceq\underline{Y}$   and 
		$\overline{X}\preceq\overline{Y}$, i.e. $\precsim$ 
		is the Kulisch-Miranker order.
	\end{enumerate}
	
	\ref{IBCI7} is satisfied, since $\prt{A,\preceq}$ is a poset and  ``$\precsim$'' is the Kulisch-Miranker order.
	
	Case of \ref{IBCI1}. $X\Rawint (Y\Rawint Z)=X\Rawint [\overline{Y}\rightarrow\underline{Z}, \underline{Y}\rightarrow\overline{Z}]= 
	[\overline{X}\rightarrow(\overline{Y}\rightarrow\underline{Z}), \underline{X}\rightarrow(\underline{Y}\rightarrow\overline{Z})]$. 
	According to property \ref{Curry-C} of BCI-algebras this term is also equal to 
	$[\overline{Y}\rightarrow(\overline{X}\rightarrow\underline{Z}), \underline{Y}\rightarrow(\underline{X}\rightarrow\overline{Z})]=
	Y\Rawint[\overline{X}\rightarrow\underline{Z}, \underline{X}\rightarrow\overline{Z}]=Y\Rawint (X\Rawint Z)$.
	
	Case of \ref{IBCI2}. $X \Rightarrow (Y \Rightarrow Z) = X \Rightarrow [\underline{Y} \rightarrow \underline{Z} \wedge \overline{Y} \rightarrow \overline{Z}, \underline{Y} \rightarrow \overline{Z} ] = [\underline{X}\rightarrow (\underline{Y} \rightarrow \underline{Z} \wedge \overline{Y} \rightarrow \overline{Z}) \wedge \overline{X} \rightarrow (\underline{Y} \rightarrow \overline{Z}), \underline{X} \rightarrow (\underline{Y} \rightarrow \overline{Z})]$. On the other hand, $Y\Rightarrow (X \Rightarrow Z) = Y \Rightarrow [\underline{X} \rightarrow \underline{Z} \wedge \overline{X} \rightarrow \overline{Z}, \underline{X} \rightarrow \overline{Z} ] = [\underline{Y}\rightarrow (\underline{X} \rightarrow \underline{Z} \wedge \overline{X} \rightarrow \overline{Z}) \wedge \overline{Y} \rightarrow (\underline{X} \rightarrow \overline{Z}), \underline{Y} \rightarrow (\underline{X} \rightarrow \overline{Z})] = [(\underline{Y}\rightarrow (\underline{X} \rightarrow \underline{Z}) \wedge \underline{Y}\rightarrow (\overline{X} \rightarrow \overline{Z})) \wedge \overline{Y} \rightarrow (\underline{X} \rightarrow \overline{Z}),  \underline{Y} \rightarrow (\underline{X} \rightarrow \overline{Z})]$. By property \ref{Curry-C},  the last  term is equal to: $[(\underline{X}\rightarrow (\underline{Y} \rightarrow \underline{Z}) \wedge \overline{X}\rightarrow (\underline{Y} \rightarrow \overline{Z})) \wedge \underline{X} \rightarrow (\overline{Y} \rightarrow \overline{Z}),  \underline{X} \rightarrow (\underline{Y} \rightarrow \overline{Z})]$ which is equal   \footnote{By associativity and commutativity of  meet.} to $[(\underline{X}\rightarrow (\underline{Y} \rightarrow \underline{Z}) \wedge \underline{X}\rightarrow (\overline{Y} \rightarrow \overline{Z})) \wedge \overline{X} \rightarrow (\underline{Y} \rightarrow \overline{Z}),  \underline{X} \rightarrow (\underline{Y} \rightarrow \overline{Z})] = [\underline{X}\rightarrow (\underline{Y} \rightarrow \underline{Z} \wedge \overline{Y} \rightarrow \overline{Z}) \wedge \overline{X} \rightarrow (\underline{Y} \rightarrow \overline{Z}), \underline{X} \rightarrow (\underline{Y} \rightarrow \overline{Z})] = X \Rightarrow ( Y \Rightarrow Z).$ 
	
	Case of \ref{IBCI3}. By definition, $X\Twoheadrightarrow Y=[\overline{X}\rightarrow\underline{Y},\underline{X}\rightarrow\overline{Y}]$ and 
	$(Z\Twoheadrightarrow X)\Rightarrow(Z\Twoheadrightarrow Y) = [(\overline{Z} \rightarrow \underline{X}) \rightarrow (\overline{Z} \rightarrow \underline{Y}) \wedge (\underline{Z} \rightarrow \overline{X}) \rightarrow (\underline{Z} \rightarrow \overline{Y}) , (\overline{Z} \rightarrow \underline{X}) \rightarrow (\underline{Z} \rightarrow \overline{Y})]$. Since, by \ref{BCI-Pr2}, \ref{BCI-Pr7} and \ref{BCI-Pr6}, 
	$\overline{X}\rightarrow\underline{Y}\preceq  \underline{X}\rightarrow\underline{Y}\preceq 
	(\overline{Z}\rightarrow\underline{X})\rightarrow (\overline{Z}\rightarrow\underline{Y})$ and 
	$\overline{X}\rightarrow\underline{Y}\preceq  \overline{X}\rightarrow\overline{Y}\preceq
	(\underline{Z}\rightarrow\overline{X})\rightarrow (\underline{Z}\rightarrow\overline{Y})$, then 
	$\overline{X}\rightarrow\underline{Y}\preceq  (\overline{Z}\rightarrow\underline{X})\rightarrow (\overline{Z}\rightarrow\underline{Y}) \wedge
	(\underline{Z}\rightarrow\overline{X})\rightarrow (\underline{Z}\rightarrow\overline{Y})$. On the other hand, by \ref{BCI-Pr2}, \ref{BCI-Pr7} and \ref{BCI-Pr6},  
	$\underline{X}\rightarrow\overline{Y}\preceq  (\overline{Z}\rightarrow\underline{X})\rightarrow (\overline{Z}\rightarrow\overline{Y})
	\preceq (\overline{Z} \rightarrow \underline{X}) \rightarrow (\underline{Z}\rightarrow\overline{Y})$. 
	Therefore,  $X\Twoheadrightarrow Y\precsim (Z\Twoheadrightarrow X)\Rightarrow(Z\Twoheadrightarrow Y)$.
	
	Case of \ref{IBCI4}. By (A-8), $[\top,\top]\Rawint [\underline{X},\overline{X}]=[\top\rightarrow\underline{X},\top\rightarrow\overline{X}]=
	[\underline{X},\overline{X}]$.
	
	
	
	Case of \ref{IBCI5} and \ref{IBCI6}. Suppose $X\ll Y\precsim Z$, then $\overline{X}\preceq\underline{Y}\preceq\underline{Z}$. Hence, $X\ll Z$. \ref{IBCI6} is analogous.
	
	
	For each BCI  $\prt{A,\rightarrow,\top}$ with at least an element $a\in A$ such that $a\neq \top$, the algebra  
	$\prt{\mathbb{A},\Rawint,[\top,\top]}$ \textbf{is not a BCI-algebra}. In fact, it fails to satisfy \ref{Curry-I}, 
	since $[a\wedge \top,\top]\Rawint [a\wedge \top,\top]=[\top\rightarrow a\wedge \top,a\wedge \top\rightarrow\top]=
	[a\wedge \top, a\wedge \top\rightarrow\top]\neq [\top,\top]$. 
	
\end{proof}

Observe that $\Twoheadrightarrow$ is the best interval representation of $\rightarrow$, but $\Rightarrow$ is not an interval representation of $\rightarrow$.


\begin{definition}
	Given a BCI(K)-algebra $\prt{A,\rightarrow,\top}$,  the structure $\prt{\mathbb{A},\Twoheadrightarrow,\Rightarrow,[\top,\top]}$ obtained by the method used in Theorem \ref{ThIntBCI} is called 
	\textbf{Interval BCI(K)-algebra}, \textbf{IBCI} (\textbf{IBCK}).
\end{definition}

The process of intervalization destroys some basic properties of BCI-algebras, like \textbf{(OP)}, and  some properties are generalized.

\begin{theorem}
	Given an IBCI(K)-algebra, $\mathbb{A}$, and $X,Y,Z\in\mathbb{A}$, the following properties are satisfied:
	
	\begin{enumerate}[labelindent=\parindent, leftmargin=*,label=($\mathcal{G}$-\arabic*)]
		\item $X\Twoheadrightarrow X=[\overline{X}\rightarrow\underline{X},\top]$. \label{Curry3-I}
		\item $X\Twoheadrightarrow Y=[\top,\top]$ iff $\overline{X}\preceq\underline{Y}$.
	\end{enumerate}
\end{theorem}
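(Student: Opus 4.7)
The plan is to reduce both parts directly to the defining formula $X\Twoheadrightarrow Y=[\overline{X}\rightarrow\underline{Y},\underline{X}\rightarrow\overline{Y}]$ from Theorem \ref{ThIntBCI}, and then invoke two already-established facts about the underlying BCI-algebra: the order characterization \ref{PropC5} ($x\rightarrow y=\top$ iff $x\preceq y$) and the transitivity of $\preceq$ from \ref{BCI-Trans}.

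For part \ref{Curry3-I}, I first expand $X\Twoheadrightarrow X$ using the definition, obtaining $[\overline{X}\rightarrow\underline{X},\underline{X}\rightarrow\overline{X}]$. Since every element $X\in\mathbb{A}$ is an interval, the defining condition on $\mathbb{A}$ gives $\underline{X}\preceq\overline{X}$, so \ref{PropC5} yields $\underline{X}\rightarrow\overline{X}=\top$. Substituting gives $X\Twoheadrightarrow X=[\overline{X}\rightarrow\underline{X},\top]$. No further computation is needed here; this step is essentially a matter of reading off definitions.

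For part ($\mathcal{G}$-2), I argue both directions by unfolding $X\Twoheadrightarrow Y=[\top,\top]$ into the two coordinate equalities $\overline{X}\rightarrow\underline{Y}=\top$ and $\underline{X}\rightarrow\overline{Y}=\top$, which by \ref{PropC5} mean $\overline{X}\preceq\underline{Y}$ and $\underline{X}\preceq\overline{Y}$. The forward direction is then immediate: the first conjunct already gives $\overline{X}\preceq\underline{Y}$. For the converse, assuming $\overline{X}\preceq\underline{Y}$, I chain with the interval conditions $\underline{X}\preceq\overline{X}$ and $\underline{Y}\preceq\overline{Y}$ and apply transitivity (\ref{BCI-Trans}) to obtain $\underline{X}\preceq\overline{Y}$; hence both coordinate equalities hold and $X\Twoheadrightarrow Y=[\top,\top]$.

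I do not expect any genuine obstacle: both properties are essentially bookkeeping on the definition of $\Twoheadrightarrow$ combined with the order characterization of BCI-algebras. The only subtle point is recognizing in part ($\mathcal{G}$-2) that the apparently stronger two-inequality condition collapses to the single inequality $\overline{X}\preceq\underline{Y}$ precisely because $\underline{X}\preceq\overline{X}$ and $\underline{Y}\preceq\overline{Y}$ are built into membership in $\mathbb{A}$.
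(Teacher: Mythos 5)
Your proof is correct and follows essentially the same route as the paper: both parts unfold the definition $X\Twoheadrightarrow Y=[\overline{X}\rightarrow\underline{Y},\underline{X}\rightarrow\overline{Y}]$ and apply the order characterization ($\mathcal{C}$-5), with the paper's version merely compressing into a single ``iff'' chain the transitivity step $\underline{X}\preceq\overline{X}\preceq\underline{Y}\preceq\overline{Y}$ that you spell out for the converse of ($\mathcal{G}$-2). No gaps; your added detail only makes explicit what the paper leaves implicit.
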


\begin{proof}
	\begin{enumerate}[labelindent=\parindent, leftmargin=*,label=($\mathcal{G}$-\arabic*)]

		\item $X\Twoheadrightarrow X=[\overline{X}\rightarrow\underline{X},\underline{X}\rightarrow\overline{X}]
		\stackrel{\mbox{\ref{PropC5}}}{=}[\overline{X}\rightarrow\underline{X},\top]$.
		
		\item  $X\Twoheadrightarrow Y=[\top,\top]$  iff $\overline{X}\rightarrow\underline{Y}=\top$ and
		$\underline{X}\rightarrow\overline{Y}=\top$ iff $\overline{X}\preceq\underline{Y}$.
		
	\end{enumerate}
	
\end{proof}

\begin{corollary}\label{CorolXiXDeg}
	$X\Twoheadrightarrow X=[\top,\top]$ iff $X$ is degenerate.
\end{corollary}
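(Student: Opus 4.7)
The plan is to derive this directly from item \ref{Curry3-I} of the preceding theorem, which gives $X\Twoheadrightarrow X=[\overline{X}\rightarrow\underline{X},\top]$. Since two intervals are equal iff their endpoints coincide, the identity $X\Twoheadrightarrow X=[\top,\top]$ is equivalent to $\overline{X}\rightarrow\underline{X}=\top$, which by property \ref{PropC5} means exactly $\overline{X}\preceq\underline{X}$.

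For the forward direction, I would combine this with the defining constraint on elements of $\mathbb{A}$, namely $\underline{X}\preceq\overline{X}$. Antisymmetry of $\preceq$ (which follows from axiom \ref{PropC4} of BCI-algebras) then forces $\underline{X}=\overline{X}$, i.e.\ $X$ is degenerate.

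For the converse, if $X$ is degenerate then $\underline{X}=\overline{X}$, so $\overline{X}\rightarrow\underline{X}=\underline{X}\rightarrow\underline{X}=\top$ by axiom \ref{Curry-I}, and hence $X\Twoheadrightarrow X=[\top,\top]$. There is no real obstacle here — the corollary is essentially a one-line consequence of \ref{Curry3-I} together with antisymmetry of the underlying order; the only thing to be careful about is invoking the interval-membership constraint $\underline{X}\preceq\overline{X}$ to close the antisymmetry argument.
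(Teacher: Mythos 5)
Your proof is correct and matches the paper's intended argument: the corollary follows immediately from the preceding theorem (you use item \ref{Curry3-I}, while the paper's item ($\mathcal{G}$-2) with $Y=X$ gives the same reduction to $\overline{X}\preceq\underline{X}$), closed off by the membership constraint $\underline{X}\preceq\overline{X}$ and antisymmetry of $\preceq$. Your attention to invoking $\underline{X}\preceq\overline{X}$ for the antisymmetry step is exactly the one point of care the argument requires, so there is nothing to correct.
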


We conclude that the relation ``$\ll$''  corresponding to the operator ``$\Twoheadrightarrow$'' will be reflexive (and hence a partial order) only if it is restricted to the subset of the degenerate intervals of $\mathbb{A}$.

The next proposition provides another situation in which an intervalized BCI-algebra behaves like a BCI.

\begin{proposition}
	Given an IBCI(K)-algebra, $\mathbb{A}$, and $X,Y,Z,X_d=[u,u], Y_d=[v,v]\in\mathbb{A}$, the following properties are satisfied:
	
	\begin{enumerate}[labelindent=\parindent, leftmargin=*,label=($\mathcal{C}_d$-\arabic*)]
		\item $(Y\Twoheadrightarrow Z)\precsim
		((Z\Twoheadrightarrow X_d)\Twoheadrightarrow(Y\Twoheadrightarrow X_d))$, i.e.,\\
		$(Y\Twoheadrightarrow Z)\Rightarrow ((Z\Twoheadrightarrow X_d)\Twoheadrightarrow(Y\Twoheadrightarrow X_d))=[\top,\top]$;
		\item $X_d\Twoheadrightarrow ((X_d\Twoheadrightarrow Y_d)\Twoheadrightarrow Y_d)=[\top,\top]$; 
		\item $X\Twoheadrightarrow X_d=X\Rightarrow X_d$. \label{Curry4b-I}
	\end{enumerate}
\end{proposition}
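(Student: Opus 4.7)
The plan is to unfold each statement to its endpoint form using the definitions of $\Twoheadrightarrow$ and $\Rightarrow$ from Theorem~\ref{ThIntBCI}, exploit the fact that $X_d$ and $Y_d$ are degenerate, and then reduce to BCI identities already listed among the properties \ref{BCI-Pr1}--\ref{BCI-Pr11}.

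For property ($\mathcal{C}_d$-3), $X\Twoheadrightarrow X_d = X\Rightarrow X_d$, I would just compute both sides. By definition $X\Twoheadrightarrow X_d = [\overline{X}\rightarrow u,\underline{X}\rightarrow u]$ while $X\Rightarrow X_d = [(\underline{X}\rightarrow u)\wedge(\overline{X}\rightarrow u),\underline{X}\rightarrow u]$. Since $\underline{X}\preceq\overline{X}$, the first-place antitonicity \ref{BCI-Pr2} gives $\overline{X}\rightarrow u \preceq \underline{X}\rightarrow u$, so the meet collapses to $\overline{X}\rightarrow u$, and the two intervals coincide.

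For ($\mathcal{C}_d$-2), I would exploit the fact that $\Twoheadrightarrow$ preserves degeneracy: $X_d\Twoheadrightarrow Y_d = [u\rightarrow v,u\rightarrow v]$, and chaining with $\Twoheadrightarrow Y_d$ and then $X_d\Twoheadrightarrow(\cdot)$ keeps us degenerate at every step. The final interval is $[u\rightarrow((u\rightarrow v)\rightarrow v),\,u\rightarrow((u\rightarrow v)\rightarrow v)]$, whose common endpoint equals $\top$ by \ref{PropC2}.

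Property ($\mathcal{C}_d$-1) is the substantive case. After expanding, one has $Y\Twoheadrightarrow Z = [\overline{Y}\rightarrow\underline{Z},\underline{Y}\rightarrow\overline{Z}]$ and, since $X_d$ is degenerate, $(Z\Twoheadrightarrow X_d)\Twoheadrightarrow(Y\Twoheadrightarrow X_d) = [(\underline{Z}\rightarrow u)\rightarrow(\overline{Y}\rightarrow u),\,(\overline{Z}\rightarrow u)\rightarrow(\underline{Y}\rightarrow u)]$. Writing the assertion as the Kulisch--Miranker inequality $Y\Twoheadrightarrow Z \precsim (Z\Twoheadrightarrow X_d)\Twoheadrightarrow(Y\Twoheadrightarrow X_d)$, I need to verify the two endpoint inequalities
\begin{align*}
\overline{Y}\rightarrow\underline{Z} &\preceq (\underline{Z}\rightarrow u)\rightarrow(\overline{Y}\rightarrow u),\\
\underline{Y}\rightarrow\overline{Z} &\preceq (\overline{Z}\rightarrow u)\rightarrow(\underline{Y}\rightarrow u),
\end{align*}
each of which is an immediate instance of \ref{PropC1} (the combinator-$B$ axiom) specialized with $x:=u$ and with $(y,z):=(\overline{Y},\underline{Z})$ or $(\underline{Y},\overline{Z})$ respectively. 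The only point requiring care is not to confuse which endpoint of $Z\Twoheadrightarrow X_d$ and $Y\Twoheadrightarrow X_d$ is which, since $\Twoheadrightarrow$ swaps the roles of $\underline{\phantom{X}}$ and $\overline{\phantom{X}}$ in its first argument; once the endpoints are kept straight, the identities plug in cleanly and no further work is needed. This bookkeeping is the main (and only) obstacle.
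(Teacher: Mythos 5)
Your proposal is correct and follows essentially the same route as the paper's own proof: direct endpoint computation of $\Twoheadrightarrow$ and $\Rightarrow$, the two instances of \ref{PropC1} with $x:=u$ for ($\mathcal{C}_d$-1), degeneracy propagation plus \ref{PropC2} for ($\mathcal{C}_d$-2), and the meet collapsing via antitonicity \ref{BCI-Pr2} for ($\mathcal{C}_d$-3). Your explicit citation of \ref{BCI-Pr2} in the third item is in fact slightly more careful than the paper, which leaves that step implicit.
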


\begin{proof}
	\begin{enumerate}[labelindent=\parindent, leftmargin=*,label=$\mathcal{C}_d$-\arabic*]
		
		\item $(Y\Twoheadrightarrow Z)=[\overline{Y}\rightarrow \underline{Z}, \underline{Y} \rightarrow \overline{Z}]$. Moreover, $((Z\Twoheadrightarrow X_d)\Twoheadrightarrow(Y\Twoheadrightarrow X_d))=[(\underline{Z}\rightarrow \overline{X_d})\rightarrow(\overline{Y}\rightarrow \underline{X_d}), (\overline{Z}\rightarrow \underline{X_d})\rightarrow(\underline{Y}\rightarrow \overline{X_d})]=[(\underline{Z}\rightarrow u)\rightarrow(\overline{Y}\rightarrow u),(\overline{Z}\rightarrow u)\rightarrow(\underline{Y}\rightarrow u)]$. By property \ref{PropC1}, $\overline{Y}\rightarrow \underline{Z}\preceq (\underline{Z}\rightarrow u)\rightarrow(\overline{Y}\rightarrow u)$ and 
		$\underline{Y}\rightarrow \overline{Z}\preceq (\overline{Z}\rightarrow u)\rightarrow(\underline{Y}\rightarrow 
		u)$. Therefore, $(Y\Rightarrow Z)\precsim ((Z\Rightarrow X_d)\Rightarrow(Y\Rightarrow X_d))$.
		
		\item $X_d\Twoheadrightarrow ((X_d\Rightarrow Y_d)\Twoheadrightarrow Y_d)= X_d\Twoheadrightarrow 
		([\overline{X_d}\rightarrow v,\underline{X_d}\rightarrow v ]\Rightarrow Y_d)=X\Rightarrow ([(\underline{X_d}\rightarrow v)
		\rightarrow v,(\overline{X_d}\rightarrow v)\rightarrow v])=[\overline{X_d}\rightarrow
		((\underline{X_d}\rightarrow v)\rightarrow v),\underline{X_d}\rightarrow((\overline{X_d}\rightarrow v)\rightarrow v)]=
		[u\rightarrow((u\rightarrow v)\rightarrow v),u\rightarrow ((u\rightarrow v)\rightarrow v)]=[\top,\top]$.
		
		
		\item $X\Twoheadrightarrow X_d=[\overline{X}\rightarrow u,\underline{X}\rightarrow u]=
		[\underline{X}\rightarrow u \wedge \overline{X}\rightarrow u, \underline{X}\rightarrow u] =
		X\Rightarrow X_d$.
	\end{enumerate}
	
\end{proof}

In what follows  a list of properties of  IBCI-algebras is provided.

\begin{theorem}
	An IBCI-algebra has the following properties: For all $X_d=[u,u], Z_d=[w,w],X,Y,Z\in\mathbb{A}$,
	
	\begin{enumerate}[labelindent=\parindent, leftmargin=*,label=\normalfont{(B-\arabic*)}]
		
		\item $[\top,\top]\precsim X$ implies $X=[\top,\top]$,
		
		\item $X \precsim Y$ implies $Y\Twoheadrightarrow Z\precsim X\Twoheadrightarrow Z$, 
		
		\item $X\precsim Y$ implies $Z\Twoheadrightarrow X\precsim Z\Twoheadrightarrow Y$,
		
		\item $X\precsim Y$ and $Y\precsim Z$ implies $X\precsim Z$,
		
		\item $X\precsim Y\Twoheadrightarrow Z_d$ implies $Y\precsim X\Twoheadrightarrow Z_d$,
		
		\item $X\Twoheadrightarrow Y\precsim (Z_d\Twoheadrightarrow X)\Twoheadrightarrow (Z_d\Twoheadrightarrow Y)$,\label{Curry3-B}
		
		\item $((Y\Twoheadrightarrow X_d)\Twoheadrightarrow X_d)\Twoheadrightarrow X_d=Y\Twoheadrightarrow X_d$,
		
		\item $X\Twoheadrightarrow Y\precsim (Y\Twoheadrightarrow X)\Twoheadrightarrow [\top,\top]$,
		
		\item $(X\Twoheadrightarrow Y)\Twoheadrightarrow [\top,\top]=(X\Twoheadrightarrow [\top,\top])\Twoheadrightarrow (Y\Twoheadrightarrow
		[\top,\top])$. 
	\end{enumerate}
\end{theorem}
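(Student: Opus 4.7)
The plan is to verify each of (B-1)--(B-9) by unfolding the two definitions
\[
X\Twoheadrightarrow Y = [\,\overline{X}\rightarrow\underline{Y},\ \underline{X}\rightarrow\overline{Y}\,],\qquad X\precsim Y\ \Leftrightarrow\ \underline{X}\preceq\underline{Y}\ \text{and}\ \overline{X}\preceq\overline{Y},
\]
the second being the Kulisch--Miranker characterisation already established in the proof of Theorem \ref{ThIntBCI}, and then invoking on each endpoint the corresponding BCI property from the list \ref{BCI-Pr1}--\ref{BCI-Pr11}. Every item will thus reduce to one or two instances of an analogous ground-level axiom of $\prt{A,\rightarrow,\top}$.

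I would first dispatch (B-1)--(B-4), which are immediate: (B-1) follows from \ref{BCI-Pr1} applied to each endpoint of $X$; (B-2) and (B-3) reduce to the first-place antitonicity \ref{BCI-Pr2} and the second-place isotonicity \ref{BCI-Pr7} applied coordinatewise to $X\Twoheadrightarrow Z$ and $Z\Twoheadrightarrow X$; and (B-4) is two copies of the transitivity \ref{BCI-Trans}. For (B-5), with $Z_d=[w,w]$ the hypothesis $X\precsim Y\Twoheadrightarrow Z_d$ unfolds to $\underline{X}\preceq\overline{Y}\rightarrow w$ and $\overline{X}\preceq\underline{Y}\rightarrow w$; applying \ref{BCI-Pr5} to each inverts them into $\overline{Y}\preceq\underline{X}\rightarrow w$ and $\underline{Y}\preceq\overline{X}\rightarrow w$, which is exactly $Y\precsim X\Twoheadrightarrow Z_d$.

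For (B-6), expanding with $Z_d=[w,w]$ yields
\[
(Z_d\Twoheadrightarrow X)\Twoheadrightarrow(Z_d\Twoheadrightarrow Y)=[\,(w\rightarrow\overline{X})\rightarrow(w\rightarrow\underline{Y}),\ (w\rightarrow\underline{X})\rightarrow(w\rightarrow\overline{Y})\,],
\]
so the two required inequalities are instances of \ref{Curry-B} applied to $(\overline{X},\underline{Y},w)$ and to $(\underline{X},\overline{Y},w)$. For (B-7), iterating the endpoint formula three times with $X_d=[u,u]$ gives endpoints of the shape $((y\rightarrow u)\rightarrow u)\rightarrow u$, which by \ref{BCI-Pr9} collapse to $y\rightarrow u$, recovering the endpoints of $Y\Twoheadrightarrow X_d$ on the nose. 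For (B-8), expansion produces $[\,(\underline{Y}\rightarrow\overline{X})\rightarrow\top,\ (\overline{Y}\rightarrow\underline{X})\rightarrow\top\,]$ on the right, and the two inequalities are direct applications of property (A-10). For (B-9), both sides reduce to endpoints of the form $(x\rightarrow y)\rightarrow\top$ versus $(x\rightarrow\top)\rightarrow(y\rightarrow\top)$, paired appropriately after expansion, and the equality is two applications of \ref{BCI-Pr11}.

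The main obstacle is purely notational rather than conceptual: the lower endpoint of $X\Twoheadrightarrow Y$ pairs $\overline{X}$ with $\underline{Y}$, while the upper pairs $\underline{X}$ with $\overline{Y}$, so in nested expressions the roles of $\underline{\cdot}$ and $\overline{\cdot}$ alternate. The degeneracy hypotheses $X_d=[u,u]$ and $Z_d=[w,w]$ in (B-5)--(B-7) exist precisely to flatten this alternation, so that the coordinatewise A-axioms can fire; without degeneracy, the outer $\Twoheadrightarrow$ in, say, (B-6) would demand a pointwise comparison such as $\overline{X}\preceq\underline{X}$, which fails in general. I would therefore keep the endpoint bookkeeping explicit throughout, and carry out one nested expansion (for (B-6)) in full detail so that the remaining cases can be verified by inspection of the corresponding endpoint axiom.
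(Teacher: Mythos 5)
Your proposal is correct and takes essentially the same route as the paper's own proof: both unfold $X\Twoheadrightarrow Y=[\overline{X}\rightarrow\underline{Y},\underline{X}\rightarrow\overline{Y}]$ together with the Kulisch--Miranker characterisation of $\precsim$, and verify each of (B-1)--(B-9) endpoint by endpoint via the corresponding ground-level BCI properties (A-1), (A-2), (A-3), (A-4), (A-6), (A-7), (A-9), (A-10) and (A-11), with the same intermediate expansions for (B-5)--(B-9). Your closing observation about why the degenerate endpoints $X_d,Z_d$ are needed to keep the lower/upper alternation from blocking the coordinatewise axioms is sound commentary, not a gap.
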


\begin{proof}
	\begin{enumerate}[labelindent=\parindent, leftmargin=*,label=\normalfont{(B-\arabic*)}]
		\item Suppose $[\top,\top]\precsim X$, then $\top\precsim\underline{X}$ and $\top\precsim\overline{X}$. By \ref{BCI-Pr1}, 
		$\underline{X}=\top$ and $\overline{X}=\top$. Therefore, $X=[\top,\top]$.
		
		\item Follows from the  relation  $\precsim$ and \ref{BCI-Pr2}.
		
		\item Follows from the  relation  $\precsim$ and \ref{BCI-Pr7}.
		
		\item Straightforward.
		
		\item Suppose $X\precsim Y\Twoheadrightarrow Z_d$, then $\underline{X}\preceq \overline{Y}\rightarrow w$ and $\overline{X}\preceq \underline{Y}\rightarrow w$. 
		By \ref{BCI-Pr5} $\overline{Y}\preceq \underline{X}\rightarrow w$ and $\underline{Y}\preceq \overline{X}\rightarrow w$. Hence, $Y\precsim X\Twoheadrightarrow Z_d$.
		
		\item $(X\Twoheadrightarrow Y)\precsim (Z_d\Twoheadrightarrow X)\Twoheadrightarrow (Z_d\Twoheadrightarrow Y)$ iff 
		$(X\Twoheadrightarrow Y)\precsim [w\rightarrow \underline{X},w\rightarrow \overline{X}]\Twoheadrightarrow 
		[w\rightarrow \underline{Y},w\rightarrow \overline{Y}]$ iff
		$(X\Twoheadrightarrow Y)\precsim [(w\rightarrow \overline{X})\raw (w\raw\underline{Y}),(w\rightarrow 
		\underline{X})\raw(w\rightarrow \overline{Y})]$ iff $[\overline{X}\rightarrow\underline{Y},\underline{X}\rightarrow\overline{Y}]
		\precsim [(w\rightarrow\overline{X})\rightarrow (w\rightarrow\underline{Y}),(w\rightarrow\underline{X})\raw (w\raw\overline{Y})]$.
		By Property \ref{BCI-Pr6}, $\overline{X}\raw\underline{Y}\preceq (w\rightarrow\overline{X})\rightarrow (w\rightarrow\underline{Y})$ 
		and $\underline{X}\raw\overline{Y}\preceq (w\rightarrow\underline{X})\rightarrow (w\rightarrow\overline{Y})$.
		
		\item $((Y\Twoheadrightarrow X_d)\Twoheadrightarrow X_d)\Twoheadrightarrow X_d=([\overline{Y}\rightarrow\underline{X_d},
		\underline{Y}\rightarrow\overline{X_d}]\Twoheadrightarrow X_d)\Twoheadrightarrow X_d=
		[(\underline{Y}\rightarrow u)\rightarrow u,(\overline{Y}\raw u)\raw u]\Twoheadrightarrow X_d=
		[((\overline{Y}\raw u)\raw u)\raw u,((\underline{Y}\raw u)\raw u)\raw u]\stackrel{\ref{BCI-Pr9}}{=}
		[\overline{Y}\raw u,\underline{Y}\rightarrow u]=[\overline{Y}\raw\underline{X_d},\underline{Y}\rightarrow\overline{X_d}]=
		Y\Twoheadrightarrow X_d$.
		
		\item Since $\overline{X}\rightarrow\underline{Y}\preceq (\underline{Y}\rightarrow\overline{X})\rightarrow\top$ and 
		$\underline{X}\rightarrow\overline{Y}\preceq (\overline{Y}\rightarrow\underline{X})\rightarrow\top$, then by definition 
		$X\Twoheadrightarrow Y\precsim (Y\Twoheadrightarrow X)\Twoheadrightarrow [\top, \top]$.
		
		\item $(X\Twoheadrightarrow [\top,\top])\Twoheadrightarrow (Y\Twoheadrightarrow [\top,\top])=
		[\overline{X}\raw\top,\underline{X}\raw\top]\Twoheadrightarrow [\overline{Y}\raw\top,\underline{Y}\rightarrow\top]=
		[(\underline{X}\rightarrow\top)\rightarrow (\overline{Y}\rightarrow\top),(\overline{X}\rightarrow \top)\rightarrow
		(\underline{Y}\rightarrow\top)]\stackrel{\ref{BCI-Pr11}}{=}[(\underline{X}\raw\overline{Y})\raw\top,(\overline{X}\raw
		\underline{Y})\raw\top]=[\overline{X}\raw\underline{Y},\underline{X}\raw\overline{Y}]\Twoheadrightarrow [\top,\top]=
		(X\Twoheadrightarrow Y)\Twoheadrightarrow [\top,\top]$.
		
	\end{enumerate}	
	
\end{proof}

\begin{proposition}[\textbf{r-WOP}]\label{PropLeftOP1}
	If $X\precsim Y$, then $X\Twoheadrightarrow Y=[\overline{X}\rightarrow\underline{Y},\top]$. 
\end{proposition}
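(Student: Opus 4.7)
The plan is to unpack the hypothesis into a condition about endpoints, then simply compute the right endpoint of $X \Twoheadrightarrow Y$ using the order property of $\rightarrow$.

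First I would recall from the proof of Theorem \ref{ThIntBCI} (specifically item 2 in the list of equivalences at the start of that proof) that $X \precsim Y$ is the Kulisch--Miranker order on $\mathbb{A}$, i.e.\ $X \precsim Y$ iff $\underline{X} \preceq \underline{Y}$ and $\overline{X} \preceq \overline{Y}$. So assuming $X \precsim Y$, I obtain $\underline{X} \preceq \underline{Y}$. Combining this with $\underline{Y} \preceq \overline{Y}$ (which holds because $Y \in \mathbb{A}$ is a valid interval by Definition \ref{DefAbsInterv}), and using transitivity of $\preceq$ (property \ref{BCI-Trans}), I conclude $\underline{X} \preceq \overline{Y}$.

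Next, by the order-property \ref{PropC5} applied in the underlying BCI-algebra $\prt{A,\rightarrow,\top}$, the inequality $\underline{X} \preceq \overline{Y}$ is equivalent to $\underline{X} \rightarrow \overline{Y} = \top$. Finally, I substitute this into the definition of $\Twoheadrightarrow$ given in Theorem \ref{ThIntBCI}:
\[
X \Twoheadrightarrow Y \;=\; [\overline{X} \rightarrow \underline{Y},\; \underline{X} \rightarrow \overline{Y}] \;=\; [\overline{X} \rightarrow \underline{Y},\; \top],
\]
which is exactly the required identity.

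There is no real obstacle here: the statement is essentially a direct computation, and the only conceptual content is the identification of $\precsim$ with the Kulisch--Miranker order (already established inside the proof of Theorem \ref{ThIntBCI}) together with the standard $\preceq$–$\rightarrow$ correspondence from \ref{PropC5}. Note that the left endpoint $\overline{X} \rightarrow \underline{Y}$ is not forced to equal $\top$ in general, which is precisely the reason the proposition is called a \emph{weak} OP (r-WOP): only the right endpoint collapses to $\top$, reflecting that $\Twoheadrightarrow$ does not preserve \textbf{(OP)} on the whole of $\mathbb{A}$.
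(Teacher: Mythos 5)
Your proof is correct and matches the paper's own argument essentially step for step: both deduce $\underline{X}\preceq\underline{Y}\preceq\overline{Y}$ from the Kulisch--Miranker reading of $\precsim$, conclude $\underline{X}\rightarrow\overline{Y}=\top$ via the order property, and substitute into the definition of $\Twoheadrightarrow$. Your version merely makes explicit the citations (transitivity \ref{BCI-Trans}, \ref{PropC5}, and the identification of $\precsim$ established in the proof of Theorem \ref{ThIntBCI}) that the paper leaves implicit.
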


\begin{proof}
	Suppose $X\precsim Y$, then $\underline{X}\preceq\underline{Y}\preceq\overline{Y}$. Therefore $\underline{X}\rightarrow\overline{Y}=\top$ and hence
	$X\Twoheadrightarrow Y=[\overline{X}\rightarrow\underline{Y},\top]$.
	
\end{proof}

\vspace{1em}

\begin{theorem}
	The properties:
	
	\begin{enumerate}[labelindent=\parindent, leftmargin=*,label=($\mathbf{OP}_{M_\arabic*}$)]
		\item $X\ll Y\mbox{ if and only if } X\Rightarrow Y=[\top,\top]$ and	
		\item $X\precsim Y\mbox{ if and only if } X\Twoheadrightarrow Y=[\top,\top]$
	\end{enumerate}
	
	are not satisfied. However, the following holds:
	
	\begin{enumerate}[labelindent=\parindent, leftmargin=*,label=($\mathbf{OP}_{\alph*}$)]
		\item $\mbox{\ If $X\ll Y$, then $X\Rightarrow Y=[\top,\top]$}$ and	
		\item $\mbox{\ If $X\Twoheadrightarrow Y=[\top,\top]$, then $X\precsim Y$.}$
	\end{enumerate}
\end{theorem}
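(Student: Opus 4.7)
The plan is to reduce everything to the two characterizations established in the proof of Theorem~\ref{ThIntBCI}: namely $X \ll Y$ iff $\overline{X} \preceq \underline{Y}$, and $X \precsim Y$ iff both $\underline{X} \preceq \underline{Y}$ and $\overline{X} \preceq \overline{Y}$ (i.e., $\precsim$ is the Kulisch-Miranker order). Once these are in hand, each of the four claims becomes a short piece of endpoint chasing inside $A$.

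For the two negative statements, I would exhibit a single counterexample that kills both equivalences at once. Pick any comparable pair $a \prec b$ in $A$ (which exists as soon as the underlying BCI is non-trivial in the relevant sense) and set $X = [a,b]$. Reflexivity of the Kulisch-Miranker order gives $X \precsim X$, whence $X \Rightarrow X = [\top,\top]$; on the other hand Corollary~\ref{CorolXiXDeg} rules out $X \Twoheadrightarrow X = [\top,\top]$ because $X$ is non-degenerate. Hence $X \Rightarrow X = [\top,\top]$ while $X \not\ll X$ refutes the ``$\Leftarrow$'' direction of $(\mathbf{OP}_{M_1})$, and simultaneously $X \precsim X$ while $X \Twoheadrightarrow X \neq [\top,\top]$ refutes the ``$\Rightarrow$'' direction of $(\mathbf{OP}_{M_2})$.

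For $(\mathbf{OP}_a)$ I would unfold $X \ll Y$ to $\overline{X} \preceq \underline{Y}$ and sandwich it with the endpoint inequalities $\underline{X} \preceq \overline{X}$ and $\underline{Y} \preceq \overline{Y}$ to obtain $\underline{X} \preceq \underline{Y}$, $\overline{X} \preceq \overline{Y}$, and $\underline{X} \preceq \overline{Y}$ all at once; each of the three implicational terms that make up $X \Rightarrow Y$ then evaluates to $\top$, so $X \Rightarrow Y = [\top,\top]$. For $(\mathbf{OP}_b)$ I would run the same idea in reverse: $X \Twoheadrightarrow Y = [\top,\top]$ forces $\overline{X} \preceq \underline{Y}$, and combining this with the endpoint comparisons yields the two inequalities characterising $X \precsim Y$.

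No step presents a genuine technical obstacle; the only thing to be careful about is flagging that the counterexample above requires the underlying BCI-algebra to contain a proper pair $a \prec b$, since otherwise $\mathbb{A}$ reduces to degenerate intervals only and both $(\mathbf{OP}_{M_1})$ and $(\mathbf{OP}_{M_2})$ hold vacuously.
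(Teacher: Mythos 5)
Your proof is correct and takes essentially the same route as the paper's: you reduce everything to the endpoint characterizations of $\ll$ and $\precsim$ established in the proof of Theorem~\ref{ThIntBCI}, prove $(\mathbf{OP}_a)$ and $(\mathbf{OP}_b)$ by the same transitivity chain $\underline{X}\preceq\overline{X}\preceq\underline{Y}\preceq\overline{Y}$, and refute the two equivalences via a non-degenerate interval, which is exactly the paper's witness for $(\mathbf{OP}_{M_2})$ (for $(\mathbf{OP}_{M_1})$ the paper only remarks that the converse implication does not follow, so your single explicit counterexample $X=[a,b]$ with $X\Rightarrow X=[\top,\top]$ but $X\not\ll X$ by Corollary~\ref{CorolXiXDeg} is marginally sharper). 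Your closing caveat that the counterexample requires some pair $a\prec b$ with $a\neq b$ in $A$ --- since otherwise $\mathbb{A}$ contains only degenerate intervals and both equivalences hold vacuously --- is a sound observation that the paper leaves implicit.
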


\begin{proof}
	$\mathbf{(OP_{M_1})}$ is not satisfied. If fact, if $ X\Rightarrow Y=[\top,\top] $ then $\underline{X}\preceq\underline{Y}$ and $\overline{X} \preceq\overline{Y}$, what does not mean that $\overline{X} \preceq\underline{Y}$.  $\mathbf{(OP_{M_2})}$ is not satisfied, since for  a given non-degenerate interval $X=[\underline{X},\overline{X}]$,  
	$X\precsim X$ and $X \Twoheadrightarrow X = [\overline{X} \raw \underline{X}, \underline{X}\raw\overline{X}]$ is not necessarily equal to $[\top,\top]$.
	
	\textbf{(OPa)} Suppose $\overline{X}\preceq\underline{Y}$, then $\underline{X}\preceq\overline{X}\preceq\underline{Y}\preceq\overline{Y}$. Therefore, $X\Rightarrow Y=[\top,\top]$.
	
	\textbf{(OPb)} Suppose $X\Twoheadrightarrow Y=[\top,\top]$, then $[\overline{X}\raw\underline{Y},\underline{X}\rightarrow\overline{Y}]=[\top,\top]$; i.e. 
	$\overline{X}\rightarrow\underline{Y}=\top$ and $\underline{X}\rightarrow\overline{Y}=\top$, so, $\overline{X}\preceq\underline{Y}$. Therefore $\underline{X}\preceq\underline{Y}$ and $\overline{X}\preceq\overline{Y}$.
	
\end{proof}

\begin{proposition}
	The implications map degenerate intervals to degenerate intervals. 
\end{proposition}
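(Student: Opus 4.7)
The plan is to proceed by direct computation from the definitions of $\Twoheadrightarrow$ and $\Rightarrow$ given in Theorem \ref{ThIntBCI}. Take arbitrary degenerate intervals $X = [x,x]$ and $Y = [y,y]$ in $\mathbb{A}$, so that $\underline{X} = \overline{X} = x$ and $\underline{Y} = \overline{Y} = y$. The goal reduces to verifying that both $X \Twoheadrightarrow Y$ and $X \Rightarrow Y$ have coinciding left and right endpoints.

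For $\Twoheadrightarrow$, substituting into the definition yields $X \Twoheadrightarrow Y = [\overline{X} \rightarrow \underline{Y},\, \underline{X} \rightarrow \overline{Y}] = [x \rightarrow y,\, x \rightarrow y]$, which is plainly degenerate. For $\Rightarrow$, the definition gives $X \Rightarrow Y = [(\underline{X} \rightarrow \underline{Y}) \wedge (\overline{X} \rightarrow \overline{Y}),\, \underline{X} \rightarrow \overline{Y}] = [(x \rightarrow y) \wedge (x \rightarrow y),\, x \rightarrow y] = [x \rightarrow y,\, x \rightarrow y]$ by idempotency of the meet, which is again degenerate.

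No real obstacle is anticipated here: both results follow immediately from evaluating the endpoint formulas once the inputs have equal left and right endpoints. The only mild subtlety is ensuring that $x \rightarrow y$ is indeed a well-defined element of $A$ and that $[x \rightarrow y, x \rightarrow y] \in \mathbb{A}$, which is immediate from the underlying BCI-algebra structure $\langle A, \rightarrow, \top\rangle$ together with the reflexivity $x \rightarrow y \preceq x \rightarrow y$.
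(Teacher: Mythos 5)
Your proof is correct and follows essentially the same route as the paper: a direct substitution of degenerate endpoints into the two endpoint formulas, with idempotency of the meet collapsing the left endpoint of $X \Rightarrow Y$. The only cosmetic difference is that the paper handles $\Twoheadrightarrow$ by citing its earlier identity ($\mathcal{C}_d$-3), namely $X\Twoheadrightarrow X_d = X\Rightarrow X_d$, rather than recomputing it as you do, which amounts to the same calculation.
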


\begin{proof}
	Straightforward, since $[u,u]\Rightarrow [v,v]=[u\rightarrow v,u\rightarrow v]=[w,w]$ and by \ref{Curry4b-I}
	$[u,u]\Twoheadrightarrow[v,v]=[w,w]$.
	
\end{proof}

\vspace{1em}








Therefore, the mathematical structure that arises from the intervalization of a BCI-algebra is a new mathematical structure which deserves to be developed. This structure will be called here \textbf{Semi-BCI algebra} and is what is exposes from now on.

\section{Semi-BCI Algebra }\label{Sec-SBCIA}

This paper showed that some implications do not satisfy the order property \textbf{(OP)} and  the correct intervalization of structures leads to relaxed structures. This section proposes a new algebraic structure which aims to capture both situations.

\begin{definition}[Semi-BCI (SBCI) algebra]
	Given a set $A$ endowed with two binary operations: ``$\twoheadrightarrow$'' and ``$\rightarrow$'', a structure 
	$\langle A,\twoheadrightarrow, \rightarrow,\top\rangle$ is a \textbf{Semi-BCI (SBCI) Algebra} whenever for all $x,y,z\in A$,
	
	\begin{enumerate}[labelindent=\parindent, leftmargin=*,label=\normalfont{(SBCI\arabic*)}]
		\item $x\twoheadrightarrow (y\twoheadrightarrow z)=y\twoheadrightarrow (x\twoheadrightarrow z)$\label{Curry2-C},
		\item $x\rightarrow (y\rightarrow z)=y\rightarrow (x\rightarrow z)$\label{Curry2b-C},
		\item $x\twoheadrightarrow y\preceq (z\twoheadrightarrow x)\rightarrow (z\twoheadrightarrow y)$\label{CIScndPIsot},
		\item $\top\twoheadrightarrow x=x$\label{CILeftNeut},
		\item If $x\ll y \preceq z$ then $x\ll z$	\label{PropCI4},
		\item If $x\preceq y \ll z$ then $x\ll z$	\label{PropCI5},
		\item If $x\preceq y$ and $y\preceq x$ then $x=y$. \label{PropCI6}
	\end{enumerate}
	
	Where $x\ll y\Leftrightarrow x\twoheadrightarrow y=\top$ and $x\preceq y\Leftrightarrow x\rightarrow y=\top$.  
	
	\vspace{1em}
	A SBCI-algebra which satisfies: $x\ll \top$ for all $x\in A$,  is called \textbf{Semi-BCK (SBCK) algebra }. An element $x$ of a SBCI-algebra which satisfies: $x\ll x$, is called \textbf{total}.
\end{definition}

The  the operation $\twoheadrightarrow$ (operation $\rightarrow$) is said to satisfy the  \textbf{Weak Order Property (WOP)} whenever the relation ``$\ll$'' (relation ``$\preceq$'')  is not a partial order.

\begin{example}\label{ExSBCI}
	Consider the following structure: $\mathbf{A=\prt{[0,1],\twoheadrightarrow_R,\rightarrow_{LK}, 1}}$, such that $x\twoheadrightarrow_R y=1-x+xy$, 
	see Reichenbach \cite{Reichenbach1935}, and $x\rightarrow_{LK} y= \min\{1, 1 - x + y \}$. Thus, since $\rightarrow_{LK}$ is the \L ukasiewicz implication, 
	then the corresponding relation $\leq$ is the usual order and $x\ll y$ if and only if $x\twoheadrightarrow_R y=1$ if and only if $x=0$ or $y=1$.
	It is straightforward to check the satisfaction of \ref{PropCI4}-\ref{PropCI6}. \textbf{\ref{Curry2-C}} 
	$x\twoheadrightarrow_R(y\twoheadrightarrow_R z)=1-xy+xyz=y\twoheadrightarrow_R(x\twoheadrightarrow_R z)$. 
	\ref{Curry2b-C} For $x \rightarrow_{LK} (y \rightarrow_{LK} z ) = min(1,1 - x + min(1, 1 - y + z))$ and $y \rightarrow_{LK} (x \rightarrow_{LK} z ) = min(1,1 - y + min(1, 1 - x + z)),$ let be the following cases: \textbf{(1)} If $x \rightarrow_{LK} (y \rightarrow_{LK} z ) = 1$, then $1 - x + min(1, 1 - y + z) \geq 1 \Leftrightarrow min(1, 1 - y + z) \geq x \Leftrightarrow 1 - y + z \geq x \Leftrightarrow 1 - x + z \geq y \Leftrightarrow min(1, 1 - x + z) \geq y \Leftrightarrow 1 - y + min(1, 1 - x + z) \geq 1 \Leftrightarrow y \rightarrow_{LK} (x \rightarrow_{LK} z ) =1 .$ \textbf{(2)} If $x \rightarrow_{LK} (y \rightarrow_{LK} z ) \neq 1$, then  $1 - x + min(1, 1 - y + z) < 1 \Leftrightarrow min(1, 1 - y + z) < x \Leftrightarrow 1 - y + z < x \Leftrightarrow 1 - x + z < y \Leftrightarrow min(1, 1 - x + z) < y \Leftrightarrow 1 - y + min(1, 1 - x + z) < 1.$ Therefore, $ y \rightarrow_{LK} (x \rightarrow_{LK} z ) = 1 - y + min(1, 1 - x + z) = 1 - y + 1 - x + z = 1 - x + 1 - y + z = 1 - x + min(1, 1 - y + z) = x \rightarrow_{LK} (y \rightarrow_{LK} z ) .$ Concerning  \ref{CIScndPIsot}: $x\twoheadrightarrow_R y=1-x+xy$ and
	$(z\twoheadrightarrow_R x)\rightarrow_{LK} (z\twoheadrightarrow_R y)=\min\{1, 1 - (1-z+zx) + (1-z+zy) \}=
	\min\{1,1-zx + zy\}$.  Since, $1-x+xy\leq 1-zx + zy$ then $1-x+xy\leq_{LK} \min\{1,1-zx + zy\}$ 
	and therefore $(x\twoheadrightarrow_R y)\leq_{LK} ( (z\twoheadrightarrow_R x)\rightarrow_{LK} (z\twoheadrightarrow_R y))$
	\ref{CILeftNeut} $1\twoheadrightarrow_R y=1-1+1\cdot y=y$. 
	Therefore $A=\prt{[0,1],\twoheadrightarrow_R,\rightarrow_{LK}, 1}$ is an SBCI-algebra.
\end{example}

\begin{remark}
	Note that any IBCI-algebra is a SBCI-algebra.
\end{remark}

\begin{proposition}\label{prop-CI7-CI8}
	In a SBCI-algebra, $\langle A,\twoheadrightarrow, \rightarrow,\top\rangle$, the following hold:
	\begin{enumerate}[labelindent=\parindent, leftmargin=*,label=\normalfont{(SBCI\arabic*)}]
		\setcounter{enumi}{7}
		\item If $x\ll y$ and $y\ll z$ then $ x\ll z$, \label{PropCI09}
		\item If $x\ll y$ and $y\ll x$ then $ x=y$,\label{PropCI10}
		\item $(y\twoheadrightarrow z)\preceq ((z\twoheadrightarrow  x)\rightarrow (y\twoheadrightarrow x))$, \label{PropCI12n}
		\item If $\top \preceq x$, then $ x  = \top $, \label{PropCI14}
		\item $x\rightarrow x=\top$\label{PropCI1}
		\item If $x\ll y$ then $x\preceq y$ \label{PropCI3},
		\item $x \twoheadrightarrow y \preceq x \rightarrow y$, \label{PropCI15}
		\item $ x \rightarrow ((x\rightarrow y)\rightarrow y) = \top $, \label{PropCI13}
		\item If $x\ll y$ then $x\twoheadrightarrow ((x\twoheadrightarrow y)\twoheadrightarrow y)=\top$, \label{CIMP}
		\item If $x\ll y$, then $z\twoheadrightarrow x \preceq z \twoheadrightarrow y$,\label{SBCI17}
		\item If $x\ll y$, then $y \twoheadrightarrow z \preceq x \twoheadrightarrow z$. \label{SBCI18}
	\end{enumerate}
\end{proposition}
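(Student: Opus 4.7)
The plan is to prove the eleven items in an order driven by their logical dependencies rather than the order they are listed. The easiest consequences of the axioms come first. Setting $z=\top$ in \ref{CIScndPIsot} and applying \ref{CILeftNeut} immediately yields \ref{PropCI15}: $x\twoheadrightarrow y\preceq x\rightarrow y$. Specialising this to $x=\top$ and assuming $\top\preceq x$ gives $x=\top\twoheadrightarrow x\preceq\top\rightarrow x=\top$, so the antisymmetry axiom \ref{PropCI6} delivers \ref{PropCI14}. Item \ref{PropCI12n} is \ref{CIScndPIsot} rewritten through the exchange axiom \ref{Curry2b-C}: reading the \ref{CIScndPIsot}-inequality as $(x\twoheadrightarrow y)\rightarrow((z\twoheadrightarrow x)\rightarrow(z\twoheadrightarrow y))=\top$ and swapping the outer two arguments via \ref{Curry2b-C} gives the desired shape after a relabeling. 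With these in place, \ref{SBCI17} and \ref{SBCI18} follow by substituting $x\twoheadrightarrow y=\top$ into \ref{CIScndPIsot} and \ref{PropCI12n} respectively and then invoking \ref{PropCI14} to collapse the resulting $\top\preceq u$ to $u=\top$.

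The main obstacle is \ref{PropCI1}, which asserts $x\rightarrow x=\top$. The key trick is to exploit the hidden symmetry of the \ref{CIScndPIsot}-inequality by applying \ref{Curry2b-C} at the top level. This produces
\[
(z\twoheadrightarrow x)\preceq(x\twoheadrightarrow y)\rightarrow(z\twoheadrightarrow y).
\]
Specialising $z=x=\top$ now collapses the left-hand side to $\top\twoheadrightarrow\top=\top$ and the right-hand side to $y\rightarrow y$ by repeated use of \ref{CILeftNeut}. Thus $\top\preceq y\rightarrow y$, and \ref{PropCI14} forces $y\rightarrow y=\top$. This is the only step that is not pure bookkeeping; everything rests on finding the specialisation $z=x=\top$, which is hidden by the mixed arity of \ref{CIScndPIsot}.

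The remaining items fall into place. \ref{PropCI3} combines \ref{PropCI15} and \ref{PropCI14}: from $x\twoheadrightarrow y=\top$ one has $\top\preceq x\rightarrow y$, hence $x\preceq y$. Item \ref{PropCI09} follows by converting the hypothesis $y\ll z$ into $y\preceq z$ through \ref{PropCI3} and then applying \ref{PropCI4}; item \ref{PropCI10} combines \ref{PropCI3} with \ref{PropCI6}. For \ref{CIMP}, the hypothesis $x\twoheadrightarrow y=\top$ together with \ref{CILeftNeut} collapses $(x\twoheadrightarrow y)\twoheadrightarrow y$ to $y$, so the claim reduces to the hypothesis itself. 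Finally, \ref{PropCI13} follows because \ref{Curry2b-C} rewrites its left-hand side as $(x\rightarrow y)\rightarrow(x\rightarrow y)$, which is $\top$ by \ref{PropCI1} applied to $x\rightarrow y$.
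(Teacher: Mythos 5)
Your proposal is correct and takes essentially the same route as the paper: the decisive steps --- deriving (SBCI11) from the instance of (SBCI3) at $\top$ together with (SBCI4) and antisymmetry (SBCI7), obtaining (SBCI12) from the exchanged form of (SBCI3) (which is exactly (SBCI10)) specialised at $z=x=\top$, and collapsing $\top\preceq u$ to $u=\top$ via (SBCI11) for (SBCI13), (SBCI17) and (SBCI18) --- coincide with the paper's own proof. The only difference is cosmetic: you order the items by logical dependency and write out the ones the paper dismisses as straightforward, which changes nothing of substance.
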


\begin{proof}
	The proof of items \ref{PropCI09}-\ref{PropCI12n} and \ref{PropCI15}-\ref{CIMP} are straightforward. \ref{PropCI14} By \ref{CIScndPIsot}, $(\top \twoheadrightarrow x) \rightarrow ((\top \twoheadrightarrow \top) \rightarrow (\top \twoheadrightarrow x)) =\top \stackrel{\ref{CILeftNeut}}{\Rightarrow} x \rightarrow (\top \rightarrow x) = \top$, as $\top \rightarrow x = \top$, then $x \rightarrow \top = \top$. So, by \ref{PropCI6}, $x = \top$. \ref{PropCI1} By \ref{PropCI12n}, $(\top \twoheadrightarrow \top) \rightarrow ((\top \twoheadrightarrow x) \rightarrow (\top \twoheadrightarrow x)) =\top \stackrel{\ref{CILeftNeut}}{\Rightarrow} \top \rightarrow (x \rightarrow x) = \top \stackrel{\ref{PropCI14}}{\Rightarrow} x \rightarrow x = \top$. \ref{PropCI3} By \ref{CIScndPIsot}, $(x \twoheadrightarrow y) \rightarrow ((\top \twoheadrightarrow x) \rightarrow (\top \twoheadrightarrow y)) = \top$, but as $x \twoheadrightarrow y = \top$ then $\top \rightarrow ((\top \twoheadrightarrow x) \rightarrow (\top \twoheadrightarrow y)) = \top \stackrel{\ref{CILeftNeut}}{\Rightarrow} \top \rightarrow (x \rightarrow y) = \top \stackrel{\ref{PropCI14}}{\Rightarrow} x \rightarrow y = \top$, so $x \preceq y$. \ref{SBCI17} By \ref{CIScndPIsot}, $(x \twoheadrightarrow y) \rightarrow ((z \twoheadrightarrow x) \rightarrow (z \twoheadrightarrow y)) =\top$, as $x \twoheadrightarrow y = \top$ then by \ref{PropCI14}, $(z \twoheadrightarrow x) \rightarrow (z \twoheadrightarrow y) = \top$, thus $(z \twoheadrightarrow x) \preceq (z \twoheadrightarrow y)$; \ref{SBCI18} By \ref{PropCI12n}, $(x \twoheadrightarrow y) \rightarrow ((y \twoheadrightarrow z) \rightarrow (x \twoheadrightarrow z)) =\top$, as $x \twoheadrightarrow y = \top$ then by \ref{PropCI14}, $(y \twoheadrightarrow z) \rightarrow (x \twoheadrightarrow z) = \top$, thus $(y \twoheadrightarrow z) \preceq (x \twoheadrightarrow z)$.
	
\end{proof}



We conclude from \ref{PropCI09} and \ref{PropCI10} that the relation ``$\ll$'' is transitive and antisymmetric, but it is not necessarily reflexive, whereas the relation ``$\preceq$'' is antisymmetric and reflexive by \ref{PropCI6} and \ref{PropCI1} respectively, but it is not necessarily transitive. The following proposition provides a condition for them to be partial orders.

\begin{proposition}
	The relation ``$\ll$" coincides with ``$\preceq$" if and only if ``$\ll$" is reflexive.	  
\end{proposition}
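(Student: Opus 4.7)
The plan is to prove both implications using the auxiliary properties already established for SBCI-algebras, principally \ref{PropCI1}, \ref{PropCI3}, \ref{PropCI4}, and \ref{PropCI5}.

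For the forward direction, assume ``$\ll$'' coincides with ``$\preceq$''. By \ref{PropCI1} we have $x\rightarrow x=\top$ for every $x\in A$, which is exactly $x\preceq x$, so ``$\preceq$'' is reflexive. Since the two relations coincide, ``$\ll$'' is reflexive as well. This half is essentially immediate.

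For the backward direction, assume ``$\ll$'' is reflexive, so $x\ll x$ for all $x\in A$. I must show $x\ll y$ iff $x\preceq y$. The implication $x\ll y\Rightarrow x\preceq y$ is \ref{PropCI3}, already proved in Proposition \ref{prop-CI7-CI8}. For the converse, suppose $x\preceq y$. Applying \ref{PropCI4} with the chain $x\ll x\preceq y$ (the first link is given by the assumed reflexivity of ``$\ll$'', the second is our hypothesis) yields $x\ll y$, as required. Hence ``$\ll$'' and ``$\preceq$'' coincide.

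The only subtle point is recognising that axiom \ref{PropCI4} — one of the auxiliary-relation properties that motivated the definition of SBCI-algebras in the first place — is exactly what is needed to promote $x\preceq y$ to $x\ll y$ once reflexivity of ``$\ll$'' is available; there is no real obstacle here, the argument is a direct exploitation of the way-below-style axioms built into the definition.
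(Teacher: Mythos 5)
Your proof is correct and follows essentially the same route as the paper's: the paper likewise derives $x\ll y$ from $x\ll x\preceq y$ via \ref{PropCI4}, dismissing the remaining steps as trivial. You have merely made explicit what the paper leaves implicit, namely that the forward direction follows from \ref{PropCI1} and that the containment $\ll\;\subseteq\;\preceq$ is \ref{PropCI3}.
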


\begin{proof}
	Suppose $\ll$ is reflexive, then $x\preceq y$ implies $x\ll x\preceq y$, by \ref{PropCI4} $x\ll y$. The rest of the proof is trivial.
	
\end{proof}

%
%

\begin{example}\label{ExSBCI2}
	The structure: $A=\prt{[0,1],\twoheadrightarrow_{GD},\rightarrow_{FD},1}$ is a SBCI, in which: 
	
	\begin{equation*}
	x\twoheadrightarrow_{GD} y=
	\begin{cases}
	1, \mbox{if $x\leq y$}\\
	y, \mbox{if $x > y$} 
	\end{cases}
	\end{equation*}
	
	and
	
	\begin{equation*}
	x \rightarrow_{FD} y=
	\begin{cases}
	1, \mbox{if $x\leq y$}\\
	max(1-x, y), \mbox{if $x > y$} 
	\end{cases}
	\end{equation*}
	and the following relations coincide: (1) $x\ll y$ if and only if $x\twoheadrightarrow_{GD} y=1$ if and only if $x \leq y$ and (2) $x\preceq y$ if and only if 
	$x\rightarrow_{FD} y= 1$ if and only if $x\leq y$.  In fact, ``$\ll$'' is reflexive and according to Baczy{\'n}ski et al \cite{Bac2008Book}[p.10, Table 1.4], $\twoheadrightarrow_{GD}$ satisfies \ref{Curry2-C} and and $\rightarrow_{FD}$ satisfies \ref{Curry2b-C}. \ref{CIScndPIsot} is also satisfied, for if $x\leq y$, then $x\twoheadrightarrow_{GD} y =1$ and therefore, if (i) $ z \leq x \leq y$, $z \twoheadrightarrow_{GD} x = z \twoheadrightarrow_{GD} y = 1$; (ii) $ x < z \leq y$, $z \twoheadrightarrow_{GD} x = x$ and $ z \twoheadrightarrow_{GD} y = 1$; (iii) $ x \leq y < z$, $z \twoheadrightarrow_{GD} x = x$ and $ z \twoheadrightarrow_{GD} y = y$. Now, if $x > y$, then $x\twoheadrightarrow_{GD} y = y$ and therefore, if (i) $ z \leq y < x$, $z \twoheadrightarrow_{GD} x = z \twoheadrightarrow_{GD} y = 1$; (ii) $ y < z \leq x$, $z \twoheadrightarrow_{GD} x = 1$ and $ z \twoheadrightarrow_{GD} y = y$; (iii) $ y < x \leq z$, $z \twoheadrightarrow_{GD} x = x$ and $ z \twoheadrightarrow_{GD} y = y$. In any case,  $x \twoheadrightarrow_{GD} y \leq (z \twoheadrightarrow_{GD} x) \rightarrow_{FD} (z \twoheadrightarrow_{GD} y)$.
	The remaining of SBCI-algebra properties are straightforward.
\end{example}

\begin{proposition} \label{prop-BCI-BCI-SBCI}
	Let $\langle A, \rightarrow_1,\top\rangle$ and $\langle A, \rightarrow_2,\top\rangle$ be  BCI-algebras with $\leq_1$ and $\leq_2$ as their 
	respective partial orders. If $\leq_2$ extends $\leq_1$ and $x\rightarrow_1 y\leq_2 x \rightarrow_2 y$ and   \begin{equation}\label{eq-ll-precsim-ll}
	w\leq_2 x\leq_1 y\leq_2 z\Longrightarrow w\leq_1 z
	\end{equation}
	for each $w,x,y,z\in A$, then         $\langle A,\rightarrow_1,\rightarrow_2,\top\rangle$ is an SBCI-algebra. The same applies to BCK-algebras and SBCK-algebras.
	
\end{proposition}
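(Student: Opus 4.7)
The plan is to verify each of the seven SBCI axioms under the identification $\twoheadrightarrow \,=\, \rightarrow_1$ and $\rightarrow \,=\, \rightarrow_2$, so that the derived relations satisfy $\ll\,=\,\leq_1$ and $\preceq\,=\,\leq_2$. Several axioms then live entirely inside a single BCI-algebra and are immediate. Specifically, \ref{Curry2-C} and \ref{Curry2b-C} are just the exchange property \ref{Curry-C} applied inside $\langle A,\rightarrow_1,\top\rangle$ and $\langle A,\rightarrow_2,\top\rangle$ respectively; \ref{CILeftNeut} is \ref{BCI-Pr8} for $\rightarrow_1$; and \ref{PropCI6} is \ref{PropC4} for $\rightarrow_2$, i.e.\ antisymmetry of $\leq_2$.

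The two ``mixing'' order-auxiliary axioms \ref{PropCI4} and \ref{PropCI5} are designed to be read off directly from the composition hypothesis (\ref{eq-ll-precsim-ll}). For \ref{PropCI4}, given $x\leq_1 y\leq_2 z$, reflexivity of $\leq_2$ produces the chain $x\leq_2 x\leq_1 y\leq_2 z$, whence (\ref{eq-ll-precsim-ll}) yields $x\leq_1 z$. For \ref{PropCI5}, given $x\leq_2 y\leq_1 z$, reflexivity of $\leq_2$ yields $x\leq_2 y\leq_1 z\leq_2 z$ and again (\ref{eq-ll-precsim-ll}) gives $x\leq_1 z$.

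The genuinely interesting case is \ref{CIScndPIsot}, which is the only axiom in which the two implications interact. I would obtain $x\rightarrow_1 y\leq_2 (z\rightarrow_1 x)\rightarrow_2(z\rightarrow_1 y)$ in three steps. First, applying \ref{Curry-B} inside $\langle A,\rightarrow_1,\top\rangle$ gives $x\rightarrow_1 y\leq_1 (z\rightarrow_1 x)\rightarrow_1(z\rightarrow_1 y)$. Second, since $\leq_2$ extends $\leq_1$ by hypothesis, this inequality is also valid for $\leq_2$. Third, the pointwise comparison hypothesis $u\rightarrow_1 v\leq_2 u\rightarrow_2 v$, instantiated at $u=z\rightarrow_1 x$ and $v=z\rightarrow_1 y$, gives $(z\rightarrow_1 x)\rightarrow_1(z\rightarrow_1 y)\leq_2 (z\rightarrow_1 x)\rightarrow_2(z\rightarrow_1 y)$, and transitivity of $\leq_2$ closes the argument.

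The main obstacle, if any, is \ref{CIScndPIsot}: all other axioms are essentially inherited, whereas \ref{CIScndPIsot} is precisely the place where the two hypotheses of the proposition combine — the role of $\rightarrow_1\leq_2\rightarrow_2$ is exactly to transport the internal Curry-\textbf{B} property of the first algebra into a mixed inequality with $\rightarrow_2$ on the outside. For the BCK-to-SBCK version of the statement, the extra clause $x\ll \top$ for all $x\in A$ is immediate from the hypothesis that $\langle A,\rightarrow_1,\top\rangle$ is BCK, since then $x\leq_1\top$ for every $x$.
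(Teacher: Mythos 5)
Your proof is correct and follows essentially the same route as the paper: axioms \ref{Curry2-C}, \ref{Curry2b-C}, \ref{CILeftNeut} and \ref{PropCI6} inherited from the two BCI-algebras, \ref{PropCI4} and \ref{PropCI5} obtained from (\ref{eq-ll-precsim-ll}) by collapsing one endpoint via reflexivity of $\leq_2$ (the paper's ``$w=x$'' and ``$y=z$'' instantiations), and \ref{CIScndPIsot} by combining the internal inequality $x\rightarrow_1 y\leq_1 (z\rightarrow_1 x)\rightarrow_1(z\rightarrow_1 y)$ with the extension hypothesis and $\rightarrow_1\leq_2\rightarrow_2$. The only cosmetic difference is that you invoke \ref{BCI-Pr6} directly where the paper derives it from \ref{PropC1} and \ref{Curry-C}, and you spell out the SBCK clause that the paper leaves implicit.
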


\begin{proof}      
	
	\begin{enumerate}[labelindent=\parindent, leftmargin=*,label=\normalfont{(SBCI\arabic*)}]
		\item Straightforward from \ref{Curry-C}.
		\item Straightforward from \ref{Curry-C}.
		
		\item By ($\mathcal{C}$-1), $(z\rightarrow_1  x)\rightarrow_1 
		((x\rightarrow_1 y)\rightarrow_1 (z\rightarrow_1 y))=\top$. So, by (A-5),  $(x\rightarrow_1 y)\rightarrow_1
		((z\rightarrow_1  x)\rightarrow_1 (z\rightarrow_1 y))=\top$, i.e. $(x\rightarrow_1 y)\leq_1
		((z\rightarrow_1  x)\rightarrow_1 (z\rightarrow_1 y))$. Thus, since $\leq_2$ extends $\leq_1$ and 
		$\rightarrow_1\leq_2 \rightarrow_2$, then  $(x\rightarrow_1 y)\leq_2
		((z\rightarrow_1  x)\rightarrow_2 (z\rightarrow_1 y))$.
		
		\item Straightforward from \ref{BCI-Pr8}.
		
		\item Straightforward  from (\ref{eq-ll-precsim-ll}) by taking $w=x$.
		
		\item Straightforward  from (\ref{eq-ll-precsim-ll}) by taking $y=z$.

		\item Straightforward from (A-4).
	\end{enumerate}

\end{proof}

\begin{corollary}
	If $\langle A, \rightarrow,\top\rangle$ is BCI-algebra, then  $\langle A,\rightarrow, \rightarrow,\top\rangle$ is an SBCI-algebra. The same applies to BCK-algebras and SBCK-algebra.
\end{corollary}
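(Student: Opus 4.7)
The plan is to derive this as an immediate application of Proposition \ref{prop-BCI-BCI-SBCI}, taking $\rightarrow_1 = \rightarrow_2 = \rightarrow$ and consequently $\leq_1\, = \,\leq_2\, = \,\preceq$, where $\preceq$ is the canonical order associated with the single BCI-implication $\rightarrow$. Since the two implications coincide, we must only check that the three hypotheses of Proposition \ref{prop-BCI-BCI-SBCI} are trivially fulfilled in this degenerate case.

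First, $\leq_2$ extending $\leq_1$ is automatic when both relations are identical. Second, the inequality $x\rightarrow_1 y \leq_2 x\rightarrow_2 y$ collapses to $x\rightarrow y \preceq x\rightarrow y$, which holds by reflexivity of $\preceq$ (itself a consequence of \ref{Curry-I}). Third, the implication in equation (\ref{eq-ll-precsim-ll}) reduces to: if $w\preceq x \preceq y \preceq z$, then $w\preceq z$, which is nothing but transitivity of $\preceq$, already recorded as property \ref{BCI-Trans}.

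With all hypotheses satisfied, Proposition \ref{prop-BCI-BCI-SBCI} gives that $\langle A,\rightarrow,\rightarrow,\top\rangle$ is an SBCI-algebra. The BCK case follows in exactly the same way, observing additionally that $x\preceq\top$ for every $x\in A$ in a BCK-algebra yields $x \ll \top$ in the resulting structure (since $\ll$ coincides with $\preceq$ here), which is precisely the SBCK condition.

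The main obstacle, conceptually, is none: this is essentially a diagonal specialization of the preceding proposition, and the only thing worth noting is that when both implications coincide the relations $\ll$ and $\preceq$ collapse onto a single partial order, so the resulting SBCI-algebra is one in which $\ll$ happens to be reflexive (i.e.\ every element is total in the sense of the definition of SBCI-algebra).
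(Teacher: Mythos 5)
Your proposal is correct and follows exactly the paper's route: the paper likewise obtains the corollary as the diagonal specialization of Proposition \ref{prop-BCI-BCI-SBCI}, noting that condition (\ref{eq-ll-precsim-ll}) collapses to transitivity \ref{BCI-Trans}; your additional checks (reflexivity via \ref{Curry-I}, the SBCK case via $x\preceq\top$) merely spell out what the paper leaves as ``straightforward.''
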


\begin{proof} Straightforward from Proposition \ref{prop-BCI-BCI-SBCI} once that (\ref{eq-ll-precsim-ll}) holds from \ref{BCI-Trans}.
	
\end{proof}

\begin{proposition}\label{PropSBCIBCI}
	Given an SBCI-algebra of the form $\langle A,\rightarrow, \rightarrow,\top\rangle$, the structure  $\langle A, \rightarrow,\top\rangle$ is a  
	BCI-algebra.
\end{proposition}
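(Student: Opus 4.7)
The plan is to check the four defining axioms (C-1)--(C-4) of a BCI-algebra one at a time, using the SBCI axioms and the consequences already established in Proposition \ref{prop-CI7-CI8}. The crucial observation is that when $\twoheadrightarrow$ and $\rightarrow$ are the same operation, the relations $\ll$ and $\preceq$ coincide, so every clause phrased with ``$\ll$'' in the SBCI axiomatization becomes a clause about ``$\preceq$'', and the duplicated axioms (SBCI1)/(SBCI2), (SBCI5)/(SBCI6) collapse into single statements.

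First, (C-3), i.e.\ $x\rightarrow x=\top$, is immediate from \ref{PropCI1}. Next, (C-4), i.e.\ that $x\rightarrow y=\top$ together with $y\rightarrow x=\top$ implies $x=y$, is exactly \ref{PropCI6} once one unfolds the definition of $\preceq$. Then (C-2), i.e.\ $x\rightarrow((x\rightarrow y)\rightarrow y)=\top$, is \ref{PropCI13}.

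The only axiom requiring a tiny manoeuvre is (C-1). I will derive it from \ref{CIScndPIsot} combined with the exchange axiom \ref{Curry2-C}. Specifically, instantiating \ref{CIScndPIsot} (with both operations being $\rightarrow$) under the renaming $x\mapsto z\rightarrow x$, I take
\[
z\rightarrow x\;\preceq\;(y\rightarrow z)\rightarrow(y\rightarrow x),
\]
which is obtained from \ref{CIScndPIsot} by the substitution $x\leftarrow z$, $y\leftarrow x$, $z\leftarrow y$. Unfolding the definition of $\preceq$, this reads
\[
(z\rightarrow x)\rightarrow\bigl((y\rightarrow z)\rightarrow(y\rightarrow x)\bigr)=\top.
\]
Applying \ref{Curry2-C} to swap the two outermost antecedents gives
\[
(y\rightarrow z)\rightarrow\bigl((z\rightarrow x)\rightarrow(y\rightarrow x)\bigr)=\top,
\]
which is precisely axiom (C-1).

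Putting the four pieces together, $\langle A,\rightarrow,\top\rangle$ satisfies \ref{PropC1}--\ref{PropC4} and is therefore a BCI-algebra. The main obstacle is really bookkeeping: making sure the variable renaming in the step for (C-1) lines up so that \ref{CIScndPIsot} and \ref{Curry2-C} deliver exactly the BCI-1 pattern; everything else is a direct citation of the general SBCI properties already proved.
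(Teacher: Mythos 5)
Your proof is correct and follows essentially the same route as the paper: (C-3) from \ref{PropCI1}, (C-4) from \ref{PropCI6}, and (C-1) by instantiating \ref{CIScndPIsot} to get $z\rightarrow x\preceq(y\rightarrow z)\rightarrow(y\rightarrow x)$ and then swapping antecedents via the exchange axiom. The only cosmetic difference is that for (C-2) you cite the already-established \ref{PropCI13}, whereas the paper re-derives it in one line from exchange and $x\rightarrow x=\top$ --- an equivalent step, since \ref{PropCI13} is itself proved that way.
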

\begin{proof}
	\begin{enumerate}[labelindent=\parindent, leftmargin=*,label=\normalfont{(C-\arabic*)}]
		\item By \ref{CIScndPIsot},  $z\rightarrow x\preceq (y\rightarrow z)\rightarrow  (y\rightarrow x)$, i.e.
		$(z\rightarrow x)\rightarrow ( (y\rightarrow z)\rightarrow  (y\rightarrow x))=\top$. Therefore, by \ref{Curry2-C},  
		$(y\rightarrow z)\rightarrow ( (z\rightarrow x)\rightarrow  (y\rightarrow x))=\top$.
		
		\item By \ref{Curry2-C} and \ref{PropCI1},   $x\rightarrow ((x\rightarrow y)\rightarrow y)=(x\rightarrow y)\rightarrow(x\rightarrow y)=\top$.
		
		\item By \ref{PropCI1}.
		
		\item By \ref{PropCI6}.
	\end{enumerate}
	
\end{proof}

\begin{proposition}\label{SBCIReductisNotBCI}
	There are SBCI-algebras $\langle A,\twoheadrightarrow, \rightarrow,\top\rangle$  s.t.  the reduct $\langle A, \twoheadrightarrow,\top\rangle$ are not BCI-algebras.
\end{proposition}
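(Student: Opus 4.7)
The plan is to exhibit a concrete counterexample, taking advantage of the fact that this is an existence statement. The natural candidate is the SBCI-algebra already established in Example \ref{ExSBCI}, namely $\langle [0,1], \twoheadrightarrow_R, \rightarrow_{LK}, 1\rangle$ with the Reichenbach implication $x \twoheadrightarrow_R y = 1 - x + xy$. Since the paper has already verified that this structure satisfies \ref{Curry2-C}--\ref{PropCI6}, I only need to show that the $\twoheadrightarrow_R$-reduct $\langle [0,1], \twoheadrightarrow_R, 1\rangle$ fails at least one BCI-algebra axiom.

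The easiest axiom to test is \ref{Curry-I}, namely $x \twoheadrightarrow_R x = 1$ for all $x$. A direct computation gives $x \twoheadrightarrow_R x = 1 - x + x^2$, and plugging in $x = 1/2$ yields $3/4 \neq 1$. This single calculation already shows that the reduct cannot be a BCI-algebra, concluding the proof.

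If a more conceptual witness is preferred, I would instead invoke the intervalization machinery: start with any non-trivial BCI-algebra $\langle A, \rightarrow, \top\rangle$ (say a linearly ordered one containing two distinct comparable elements), form its IBCI-algebra $\langle \mathbb{A}, \Twoheadrightarrow, \Rightarrow, [\top,\top]\rangle$ via Theorem \ref{ThIntBCI}, and observe that this is an SBCI-algebra by the Remark immediately following Example \ref{ExSBCI}. Then Corollary \ref{CorolXiXDeg} tells us that for any non-degenerate $X \in \mathbb{A}$, $X \Twoheadrightarrow X \neq [\top,\top]$, again violating \ref{Curry-I} of the reduct.

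No real obstacle is expected, since the whole point of the preceding development was to show that the intervalization process breaks the reflexivity property of the implication associated to the way-below relation; the proposition just records this fact in abstract form. The only care needed is to cite the correct example and perform (or reference) the tiny numerical check.
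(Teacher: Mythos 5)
Your primary argument is exactly the paper's own proof: take the SBCI-algebra $\langle [0,1],\twoheadrightarrow_R,\rightarrow_{LK},1\rangle$ of Example \ref{ExSBCI} and observe that $x\twoheadrightarrow_R x = 1-x+x^2 \neq 1$ for $x\in(0,1)$, so axiom \ref{Curry-I} fails in the reduct. The alternative route via Theorem \ref{ThIntBCI} and Corollary \ref{CorolXiXDeg} is also sound (given a base BCI-algebra meeting the semilattice hypotheses and possessing a non-degenerate interval), but it is an optional extra rather than a departure from the paper's argument.
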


\begin{proof}
	Consider the SBCI $A=\prt{[0,1],\twoheadrightarrow_R,\rightarrow_{LK}, 1}$ seen in Example \ref{ExSBCI}. The reduct $ \langle [0,1] , \twoheadrightarrow_{R}, 1 \rangle $ is not a BCI-algebra, since  	 
	$ x \twoheadrightarrow_{R} x = 1 - x + x^{2} \neq 1$, for all $x \in (0,1)$. Therefore, $(C-3)$ does not hold.
	
\end{proof}

%

\begin{proposition}
	There are algebras $\prt{A,\rightarrow,\top}$ in which the exchange principle \normalfont{\textbf{(EP)}} is satisfied, but 
	$\prt{A,\rightarrow,\rightarrow,\top}$ is not an SBCI-algebra.
\end{proposition}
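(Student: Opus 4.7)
My approach is by counterexample. The plan is to exhibit a concrete algebra $\langle A,\rightarrow,\top\rangle$ in which the exchange principle holds but where identifying both arrows of the SBCI signature with $\rightarrow$ violates at least one of the axioms \ref{Curry2-C}--\ref{PropCI6}. Since identifying $\twoheadrightarrow$ with $\rightarrow$ collapses \ref{Curry2-C} and \ref{Curry2b-C} into EP and trivializes \ref{PropCI4}, \ref{PropCI5}, the only axioms left that could possibly fail are \ref{CIScndPIsot}, \ref{CILeftNeut} and \ref{PropCI6}. So it suffices to find an EP-algebra failing one of these three.

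The simplest candidate is the constant implication. I would take any set $A$ containing at least two elements, fix a distinguished element $\top \in A$, and define $x \rightarrow y = \top$ for all $x, y \in A$. EP holds trivially, since both $x \rightarrow (y \rightarrow z)$ and $y \rightarrow (x \rightarrow z)$ evaluate to $\top$. However, for any $a \in A$ with $a \neq \top$ we have $\top \rightarrow a = \top \neq a$, which directly contradicts \ref{CILeftNeut}. Hence $\langle A,\rightarrow,\rightarrow,\top\rangle$ is not an SBCI-algebra.

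Since a critic might regard the constant implication as too artificial, I would also mention a more natural witness. A convenient option is the Yager implication $\rightarrow_{\mbox{\tiny YG}}$ already used in the Introduction, on $\langle [0,1],\rightarrow_{\mbox{\tiny YG}},1\rangle$. One checks EP directly from $x \rightarrow_{\mbox{\tiny YG}} y = y^x$, using $(z^y)^x = z^{yx} = z^{xy} = (z^x)^y$ and handling the boundary case where $x = y = 0$ separately. On the other hand, at $x = y = z = 0.5$ one computes that $y^x \not\preceq (y^z)^{x^z}$ under the relation $u \preceq v \Leftrightarrow v^u = 1$ (neither side equals $1$ and $y^x \neq 0$), so \ref{CIScndPIsot} fails.

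There is no real obstacle in this argument; the only subtlety is picking a witness, and the constant implication already suffices. Either example establishes the proposition.
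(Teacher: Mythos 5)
Your proof is correct, but it routes through different axioms than the paper does, so a comparison is worthwhile. The paper also gives two witnesses: the Yager implication, which it rules out by appealing to the \emph{derived} property $x\rightarrow x=\top$ (property \ref{PropCI1} of Proposition \ref{prop-CI7-CI8}), failing instantly at any $x\in(0,1)$ since $x^x\neq 1$; and the Weber implication $x\rightarrow_{\mbox{\tiny WB}}y$, which satisfies \textbf{(EP)} but violates antisymmetry \ref{PropCI6} (e.g.\ $0.5\rightarrow_{\mbox{\tiny WB}}0.3=0.3\rightarrow_{\mbox{\tiny WB}}0.5=1$). You instead refute left neutrality \ref{CILeftNeut} with the constant implication, and axiom \ref{CIScndPIsot} directly for Yager via the numerical witness $x=y=z=0.5$ (your computation checks out: $u=y^x=2^{-1/2}\neq 0$ and $v=(y^z)^{x^z}\approx 0.78\neq 1$, so $v^u\neq 1$). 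The paper's use of the derived reflexivity property is slicker for Yager --- no arithmetic needed --- while your constant-implication example is more elementary than Weber and makes the non-vacuousness of \ref{CILeftNeut} vivid; both argument styles (refuting an axiom directly versus refuting a provable consequence) are equally legitimate ways to show a structure is not an SBCI-algebra.

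One caveat: your preliminary remark that identifying the two arrows ``trivializes'' \ref{PropCI4} and \ref{PropCI5} is wrong. When $\twoheadrightarrow\,=\,\rightarrow$ the relations $\ll$ and $\preceq$ coincide, and those two axioms each become \emph{transitivity} of $\preceq$, which \textbf{(EP)} does not guarantee; so your list of ``the only axioms that could possibly fail'' is incomplete. This does not damage the proof --- you only need one failing axiom, and you exhibit genuine failures of \ref{CILeftNeut} and \ref{CIScndPIsot} --- but the narrowing-down step should either be dropped or corrected to include \ref{PropCI4}/\ref{PropCI5} among the candidates.
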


\begin{proof}
	Take the  algebra $\prt{[0,1],\rightarrow_{\mbox{\tiny YG}},1}$, such that:
	
	\begin{equation*}
	x \rightarrow_{\mbox{\tiny YG}} y = 
	\begin{cases}
	1 \text{, if $x=y=0$}
	\\
	y^x \text{, otherwise}
	\end{cases}.
	\end{equation*}
	
	The implication ``$\rightarrow_{\mbox{\tiny YG}}$" satisfies \textbf{(EP)} --- see \cite{Bac2008Book}[p.10, Table 1.4]. However, it is easy to 
	check that it fails to satisfy \ref{PropCI1} --- take $x \in (0,1)$. 
	
	Alternatively, consider the algebra $\prt{[0,1],\rightarrow_{\mbox{\tiny WB}},1}$, such that:
	
	\begin{equation*}
	x \rightarrow_{\mbox{\tiny WB}} y = 
	\begin{cases}
	1 \text{, if $x<1$}
	\\
	y \text{, if $x=1$}
	\end{cases}.
	\end{equation*}
	
	This implication satisfies \ref{Curry2-C} \cite{Bac2008Book}[p.10, Table 1.4]. However, it is easy to check that it fails to satisfy \ref{PropCI6} --- take $x=0.5$ and $y=0.3$.
	
\end{proof}





\section{Relating Semi-BCI Algebras and Pseudo-BCI algebras}\label{SecPseudo}

The generalization of BCI/BCK-algebras is not new. In fact, G. Georgescu and A. Iorgulescu \cite{Georgescu2001} proposed an extension for BCK-algebras and later W. A. Dudek and Y. B. Jun \cite{Dudek} proposed an extension for BCI-algebras. The first  was called \textbf{Pseudo-BCK algebras } and the second \textbf{Pseudo-BCI algebras}. Like SBCI-algebras they propose two operations which generalize the primitive operation of BCK/BCI-algebras. Therefore a natural question arises: Are SBCI-algebras just a rewritten of those algebras? This section shows that the answer to this question is negative, meaning that SBCI/SBCK-algebras are completely new structures which generalize BCI-algebras. This section also shows how both structures are related.

\begin{definition}[\cite{Dymek2013}]
	A \textbf{pseudo-BCI algebra (PBCIA)} is a structure $\prt{A,\leq,\rightarrow,\rightsquigarrow,\top}$ s.t. ``$\leq$'' is a binary relation on the set $A$, ``$\rightarrow$'' and ``$\rightsquigarrow$'' are binary operations on $A$, $\top\in A$ and  for all $x,y,z\in A$:
	
	\begin{enumerate}[labelindent=\parindent, leftmargin=*,label=\normalfont{(PB-\arabic*)}]
		\item $x\rightarrow y\leq (y\rightarrow z)\rightsquigarrow(x\rightarrow z)$,
		\item  $x\rightsquigarrow y\leq (y\rightsquigarrow z)\rightarrow(x\rightsquigarrow z)$,\label{PB-2}
		\item $x\leq (x\rightarrow y)\rightsquigarrow y$,
		\item $x\leq (x\rightsquigarrow y)\rightarrow y$,\label{PB-4}
		\item\label{PB5}  $x\leq x$,
		\item if $x\leq y$ and $y\leq x$, then $x=y$,
		\item\label{PB7} $x\leq y\Leftrightarrow x\rightarrow y=\top\Leftrightarrow x\rightsquigarrow y=\top$.
	\end{enumerate}	 
\end{definition}

\begin{example} \label{ex.PBCI}
	The structure $\mathcal{A}= \langle \mathbb{R}^2, \preceq, \twoheadrightarrow, \rightarrow, (0, 0) \rangle$, such that $(x_1, y_1) \twoheadrightarrow (x_2, y_2) = (x_2 - x_1, (y_2 - y_1)e^{-x_1})$ and $(x_1, y_1) \rightarrow (x_2, y_2) = (x_2 - x_1, y_2 - y_1e^{x_2-x_1})$, is a PBCI-algebra.
\end{example}

The next proposition shows that PBCI-algebras are not suitable to model the intervalization of BCIs.

\begin{proposition}
	Let $\prt{A,\rightarrow,\top}$ be a BCI-algebra, $\prt{A,\preceq}$ be a meet semilattice, such that for each $x,y,z\in A$, $x\rightarrow (y\wedge z)= x \rightarrow y \wedge x \rightarrow z$, then the structure: $\prt{\mathbb{A},\Twoheadrightarrow,\Rightarrow,[\top,\top]}$, where $\Twoheadrightarrow$ and $\Rightarrow$ are defined in Theorem \ref{ThIntBCI}, is not a PBCI-algebra.
\end{proposition}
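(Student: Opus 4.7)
The plan is to argue by contradiction, exploiting the fact that any PBCI must have a reflexive order relation that is simultaneously witnessed by \emph{both} of its implication operations, whereas $\Twoheadrightarrow$ fails this reflexivity on non-degenerate intervals.

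First, I would assume for contradiction that $\prt{\mathbb{A},\Twoheadrightarrow,\Rightarrow,[\top,\top]}$ is a PBCI, under any assignment of the pair $\Twoheadrightarrow,\Rightarrow$ to the pair $\rightarrow,\rightsquigarrow$ of PBCI implications. By \ref{PB5}, $X \leq X$ holds for every $X \in \mathbb{A}$, and then \ref{PB7} forces both $X \Twoheadrightarrow X = [\top,\top]$ and $X \Rightarrow X = [\top,\top]$ to hold for every $X$, independently of which implication plays which role. The second identity is harmless (it is part of the content of Theorem \ref{ThIntBCI}), but Corollary \ref{CorolXiXDeg} directly refutes the first: $X \Twoheadrightarrow X = [\top,\top]$ holds precisely when $X$ is degenerate.

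To close the argument I would exhibit a non-degenerate interval in $\mathbb{A}$. Since the setup concerns non-trivial BCI-algebras $\prt{A,\rightarrow,\top}$ whose underlying order is a meet-semilattice, one can always find two distinct comparable elements $a \prec b$ in $A$; the interval $[a,b]$ is then a non-degenerate element of $\mathbb{A}$ for which $[a,b] \Twoheadrightarrow [a,b] \neq [\top,\top]$. This contradicts the conclusion derived above from \ref{PB5} and \ref{PB7}, completing the proof.

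The main (mild) obstacle is the symmetry in how one identifies the two interval implications with the two PBCI implications: one might worry that assigning $\Twoheadrightarrow$ to the role of $\rightsquigarrow$ rather than $\rightarrow$ could dodge the reflexivity trap. But \ref{PB7} treats both implications symmetrically --- each must return $\top$ on precisely the relation $\leq$ --- so the reflexivity failure of $\Twoheadrightarrow$ suffices regardless of the chosen identification, and no routine calculations are needed beyond citing Corollary \ref{CorolXiXDeg}.
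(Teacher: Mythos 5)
Your proof is correct and follows essentially the same route as the paper's: invoke \ref{PB5} and \ref{PB7} to force $X \Twoheadrightarrow X = [\top,\top]$ for every interval, then contradict this via Corollary \ref{CorolXiXDeg}. Your two added touches --- checking that the identification of $\Twoheadrightarrow,\Rightarrow$ with $\rightarrow,\rightsquigarrow$ is irrelevant because \ref{PB7} ties both operations to the same relation, and exhibiting a non-degenerate interval $[a,b]$ (guaranteed by the meet-semilattice hypothesis once $A$ is non-trivial, since $a\wedge b$ is comparable to $a$ and $b$) --- are details the paper leaves implicit, but they do not change the argument.
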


\begin{proof}
	By \ref{PB5} and \ref{PB7}, for all $X \in \mathbb{A}$, $X \Twoheadrightarrow X = [\top,\top]$ should hold, however by Corollary \ref{CorolXiXDeg}, this only applies if $X$ is degenerate.
	
\end{proof}

Given a PBCI-algebra, if the relations: (a) $x\leq_1 y\Leftrightarrow x\rightarrow y=\top$ and (b) $x\leq_2 y \Leftrightarrow x\rightsquigarrow y=\top$ are defined, then the axiom  \ref{PB7} imposes that they  must coincide. In the case of SBCI-algebras the relations $\ll$ and $\preceq$ does not necessarily coincide. Moreover, even if they coincide there are SBCIs which are not PBCIs --- see Proposition \ref{PropPseudoSBCICoin}. Finally,  there are SBCIs in which the relation ``$\ll$'' can be irreflexive  refuting  the axiom \ref{PB5} (see Proposition \ref{SBCIReductisNotBCI}). Therefore, this lead us to conclude that SBCI and PBCI-algebras are different structures. 

\begin{proposition} \label{PropPseudoSBCICoin}
	There are SBCI-algebras $\langle A,\twoheadrightarrow, \rightarrow,\top\rangle$ whose relations ``$\ll$" and ``$\preceq$" coincide but are not PBCI-algebras.
\end{proposition}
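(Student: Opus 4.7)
The plan is to recycle Example \ref{ExSBCI2}, which already exhibits an SBCI-algebra $\mathcal{A}=\prt{[0,1],\twoheadrightarrow_{GD},\rightarrow_{FD},1}$ in which $x\ll y$, $x\preceq y$ and $x\leq y$ (the usual order on $[0,1]$) all coincide. So the first hypothesis (coincidence of $\ll$ and $\preceq$) is granted for free by that example, and it only remains to check that $\mathcal{A}$ is not a PBCI-algebra.

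To see that, I would observe that in any candidate PBCI reading of $\mathcal{A}$ the two operations $\twoheadrightarrow_{GD}$ and $\rightarrow_{FD}$ must be matched in one of two ways with the PBCI operations $\rightarrow$ and $\rightsquigarrow$, and my strategy is to produce a single pair of elements that simultaneously refutes axiom \normalfont{(PB-3)} under one matching and axiom \ref{PB-4} under the other. Concretely, I take $x=0.5$ and $y=0.3$ and compute
\[
(x\rightarrow_{FD} y)\twoheadrightarrow_{GD} y \;=\; 0.5\twoheadrightarrow_{GD} 0.3 \;=\; 0.3,
\]
so that $x\not\leq (x\rightarrow_{FD} y)\twoheadrightarrow_{GD} y$. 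Under the matching $\rightarrow\mapsto \rightarrow_{FD}$, $\rightsquigarrow\mapsto \twoheadrightarrow_{GD}$, this refutes \normalfont{(PB-3)}; under the opposite matching $\rightarrow\mapsto \twoheadrightarrow_{GD}$, $\rightsquigarrow\mapsto \rightarrow_{FD}$, the same computation refutes \ref{PB-4}. Since one of these axioms fails either way, $\mathcal{A}$ cannot be organized into a PBCI-algebra.

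There is essentially no hard step: both operations are piecewise-defined on $[0,1]$, so the whole argument reduces to one direct numerical check at $(0.5,0.3)$, together with the remark (already contained in Example \ref{ExSBCI2}) that $\ll$ and $\preceq$ collapse to the usual order. The only point requiring a line of care is the observation that the PBCI definition is symmetric in $\rightarrow$ and $\rightsquigarrow$, so one really does have to rule out both possible identifications; the single pair $(0.5,0.3)$ handles both cases simultaneously.
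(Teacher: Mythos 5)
Your proof is correct, and it uses the same underlying example (Example \ref{ExSBCI2}) and the same overall strategy as the paper --- a direct numerical refutation of a PBCI axiom --- but it targets a different axiom pair. The paper takes the triple $x=\tfrac{3}{4}$, $y=\tfrac{1}{2}$, $z=\tfrac{1}{5}$, computes $x\rightarrow_{FD} y=\tfrac{1}{2}$ and $(y\rightarrow_{FD} z)\twoheadrightarrow_{GD}(x\rightarrow_{FD} z)=\tfrac{1}{4}$, and concludes that (PB-2) fails (the same computation refutes (PB-1) under the opposite matching of the operations, though the paper leaves that implicit); you instead refute (PB-3)/(PB-4) with the single pair $(0.5,0.3)$, since $(x\rightarrow_{FD} y)\twoheadrightarrow_{GD} y=0.5\twoheadrightarrow_{GD}0.3=0.3\not\geq 0.5$, and your computations check out. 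Two small advantages of your version: the witness needs only two elements rather than three, and you explicitly address the symmetry of the PBCI signature in $\rightarrow$ and $\rightsquigarrow$, observing that the same expression kills one axiom under each of the two possible identifications --- a point the paper's proof glosses over. One remark worth making explicit in either proof: axiom (PB-7) forces the PBCI relation $\leq$ to coincide with $\{(x,y):x\rightarrow_{FD} y=1\}$ (equivalently with $\{(x,y):x\twoheadrightarrow_{GD} y=1\}$), i.e.\ with the usual order on $[0,1]$, which is what licenses evaluating the axioms against the usual order; you gesture at this via Example \ref{ExSBCI2}, and it is the hinge on which both arguments turn.
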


\begin{proof}
	The SBCI $\mathcal{A} = \langle[0,1],\twoheadrightarrow_{GD},\rightarrow_{FD},1\rangle$ provided at Example \ref{ExSBCI2} is not a PBCI-algebra. In fact, take $x=\dfrac{3}{4}$, $y =\dfrac{1}{2}$ and $z = \dfrac{1}{5}$, then $x \rightarrow_{FD} y = \dfrac{1}{2} $ and $(y \rightarrow_{FD} z) \twoheadrightarrow_{GD} (x \rightarrow_{FD} z) = \dfrac{1}{4} $, but since the relations ``$\ll$" and ``$\preceq$" coincide with the usual order,  \ref{PB-2} is not satisfied.
	
\end{proof}

\begin{proposition} \label{PBCI.NaoSBCI}
	There are PBCI-algebras $\langle A, \preceq, \twoheadrightarrow, \rightarrow,\top\rangle$ which are not SBCI-algebras.
\end{proposition}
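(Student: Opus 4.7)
The plan is to exhibit a concrete PBCI-algebra that fails to be an SBCI-algebra, and the natural candidate is already at hand: the structure $\mathcal{A}= \langle \mathbb{R}^2, \preceq, \twoheadrightarrow, \rightarrow, (0,0) \rangle$ of Example \ref{ex.PBCI}. The strategy is to show that $\mathcal{A}$ violates the exchange axiom \ref{Curry2-C}. The heuristic reason is that the defining formula $(x_1,y_1)\twoheadrightarrow(x_2,y_2)=(x_2-x_1,(y_2-y_1)e^{-x_1})$ carries a factor $e^{-x_1}$ which depends asymmetrically on the first argument, so iterating the operation mixes exponential scalings in a way that is not invariant under swapping the two outer arguments.

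Concretely, I would expand both $x\twoheadrightarrow(y\twoheadrightarrow z)$ and $y\twoheadrightarrow(x\twoheadrightarrow z)$ for generic $x=(a,b)$, $y=(c,d)$, $z=(e,f)$. The first coordinates automatically agree (both equal $e-a-c$), but equating the second coordinates reduces to the identity $b\,e^{-a}(e^{-c}-1)=d\,e^{-c}(e^{-a}-1)$, which manifestly fails unless $a$, $c$, $b$, or $d$ are chosen very restrictively. To make this precise I would pick a minimal triple with nonzero first coordinates and at least one nonzero second coordinate, for instance $x=(1,1)$, $y=(1,0)$, $z=(0,0)$. A direct calculation then yields $x\twoheadrightarrow(y\twoheadrightarrow z)=(-2,-e^{-1})$ and $y\twoheadrightarrow(x\twoheadrightarrow z)=(-2,-e^{-2})$, and since $e^{-1}\neq e^{-2}$ the axiom \ref{Curry2-C} is refuted; hence $\mathcal{A}$ is not an SBCI-algebra.

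The proof is essentially a short computation, so there is no conceptual obstacle; the only subtlety is avoiding accidental cancellations when choosing the witnessing triple. As a sanity check I would also inspect axiom \ref{Curry2b-C}: the operation $\rightarrow$ carries the factor $e^{x_2-x_1}$, which breaks exchange for the same structural reason, giving an independent route to the conclusion should the first calculation turn out to be more delicate than expected. In either case, combined with the PBCI-ness of $\mathcal{A}$ established in Example \ref{ex.PBCI}, this completes the argument.
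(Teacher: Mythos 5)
Your proposal is correct and takes essentially the same route as the paper: both refute the exchange axiom \ref{Curry2-C} for the PBCI-algebra of Example \ref{ex.PBCI}, the paper with a generic triple subject to the non-degeneracy condition $y_2(e^{x_1}-1)\neq y_1(e^{x_2}-1)$, you with the concrete witness $x=(1,1)$, $y=(1,0)$, $z=(0,0)$, which satisfies exactly that condition (your identity $b\,e^{-a}(e^{-c}-1)=d\,e^{-c}(e^{-a}-1)$ rearranges to the paper's). Your computation checks out, so the argument is complete.
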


\begin{proof}
	The PBCI $\mathcal{A}= \langle \mathbb{R}^2, \preceq, \twoheadrightarrow, \rightarrow, (0, 0) \rangle$ presented in Example \ref{ex.PBCI} is not a SBCI-algebra. In fact, take $(x_1, y_1), (x_2, y_2) \in \mathbb{R}^2$ such that $y_2(e^{x_1}-1) \neq y_1(e^{x_2}-1)$ and any $(x_3, y_3) \in \mathbb{R}^2 $. Note that:
	\begin{eqnarray*}
		(x_1, &y_1&) \twoheadrightarrow ((x_2, y_2) \twoheadrightarrow (x_3, y_3)) = \\
		&=& (x_1, y_1) \twoheadrightarrow (x_3 -x_2, y_3 - y_2e^{x_3 - x_2}) \\
		&=& ((x_3 -x_2) - x_1, (y_3 - y_2e^{x_3 - x_2}) - y_1e^{(x_3 -x_2) - x_1}) \\
		&=& ( x_3 -x_2 - x_1, y_3 - y_2e^{x_3 - x_2} - y_1e^{x_3 -x_2 - x_1})
	\end{eqnarray*}
	and
	\begin{eqnarray*}
		(x_2, &y_2&) \twoheadrightarrow ((x_1, y_1) \twoheadrightarrow (x_3, y_3)) = \\
		&=& (x_2, y_2) \twoheadrightarrow (x_3 -x_1, y_3 - y_1e^{x_3 - x_1}) \\
		&=& ((x_3 -x_1) - x_2, (y_3 - y_1e^{x_3 - x_1}) - y_2e^{(x_3 -x_1) - x_2}) \\
		&=& ( x_3 -x_2 - x_1, y_3 - y_1e^{x_3 - x_1} - y_2e^{x_3 -x_2 - x_1}).
	\end{eqnarray*}
	Since $y_2(e^{x_1}-1) \neq y_1(e^{x_2}-1)$, then $ y_3 - y_2e^{x_3 - x_2} - y_1e^{x_3 -x_2 - x_1} \neq y_3 - y_1e^{x_3 - x_1} - y_2e^{x_3 -x_2 - x_1}$, and therefore $\mathcal{A}$ does not satisfy the axiom \ref{Curry2-C}.
	
\end{proof}

The next proposition ensures that the intersection between the PBCIs and SBCIs is formed only by  BCI-algebras.

\begin{proposition} \label{setasiguais}
	Let $ \mathcal{A} = \langle A, \twoheadrightarrow, \rightarrow,\top \rangle$ be a SBCI-algebra, s.t. the relations: ``$\ll$" and ``$\preceq$" correspond to the operations: “$\twoheadrightarrow$” and “$\rightarrow$”, respectively. If $ \mathcal{A}$ is also a PBCI-algebra, then $ \langle A, \rightarrow,\top \rangle $ is a BCI-algebra.
\end{proposition}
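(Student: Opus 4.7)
The plan is to show that the two arrows of $\mathcal{A}$ must coincide, whence Proposition \ref{PropSBCIBCI} immediately finishes the job. I proceed in four steps.

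First, I would unpack what the hypothesis buys us. Since $\mathcal{A}$ is also a PBCI-algebra (on the same underlying set, with the same two arrows playing the roles of the two PBCI operations), axiom \ref{PB7} forces $x \twoheadrightarrow y = \top$, $x \rightarrow y = \top$, and $x \leq y$ to be equivalent for all $x,y\in A$. In SBCI terms this means the relations ``$\ll$'' and ``$\preceq$'' are one and the same relation, which is what the statement's matching of arrows to relations already suggested.

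Second, I would exploit the PBCI axioms \ref{PB-4} and (PB-3), which -- regardless of whether we identify $\twoheadrightarrow$ with $\rightarrow$ or with $\rightsquigarrow$ in the PBCI signature -- give
\[
x \leq (x \twoheadrightarrow y) \rightarrow y \quad\text{and}\quad x \leq (x \rightarrow y) \twoheadrightarrow y
\]
for all $x,y\in A$. Translating the first inequality into the arrow $\rightarrow$ yields $x \rightarrow \bigl((x \twoheadrightarrow y) \rightarrow y\bigr) = \top$. Applying the SBCI exchange axiom \ref{Curry2b-C} to swap the two antecedents gives $(x \twoheadrightarrow y) \rightarrow (x \rightarrow y) = \top$, i.e.\ $x \twoheadrightarrow y \preceq x \rightarrow y$. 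A symmetric manipulation of the second inequality, this time using the SBCI exchange axiom \ref{Curry2-C} for $\twoheadrightarrow$, gives $(x \rightarrow y) \twoheadrightarrow (x \twoheadrightarrow y) = \top$, i.e.\ $x \rightarrow y \ll x \twoheadrightarrow y$.

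Third, since the two relations coincide, the two displayed conclusions become $x \twoheadrightarrow y \leq x \rightarrow y$ and $x \rightarrow y \leq x \twoheadrightarrow y$. By antisymmetry (\ref{PropCI6}, equivalently (PB-6)), we conclude $x \twoheadrightarrow y = x \rightarrow y$ for all $x,y \in A$. Hence $\mathcal{A}$ has the special form $\langle A,\rightarrow,\rightarrow,\top\rangle$, and Proposition \ref{PropSBCIBCI} delivers that $\langle A,\rightarrow,\top\rangle$ is a BCI-algebra, completing the proof.

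The decisive move -- and the step I expect to be the main obstacle if something subtle intervenes -- is recognizing that the PBCI axioms (PB-3)/(PB-4) mix the two arrows in precisely the way that the SBCI exchange axioms can untangle. It is also worth pausing to check that the argument is insensitive to the bijection chosen between the SBCI pair $(\twoheadrightarrow,\rightarrow)$ and the PBCI pair $(\rightarrow,\rightsquigarrow)$: both assignments yield the same pair of inequalities above, so no case analysis on the identification is needed.
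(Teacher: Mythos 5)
Your proof is correct and follows essentially the same route as the paper's: both establish that the two arrows coincide (using \ref{PB7} to identify the relations, then \ref{PB-4} together with the exchange axiom \ref{Curry2-C} to get $x \rightarrow y \preceq x \twoheadrightarrow y$, and antisymmetry \ref{PropCI6}) and then invoke Proposition \ref{PropSBCIBCI}. The only cosmetic difference is that for the reverse inequality $x \twoheadrightarrow y \preceq x \rightarrow y$ you rederive it from (PB-3) plus \ref{Curry2b-C}, whereas the paper simply cites the already-established SBCI property \ref{PropCI15}; both are valid.
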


\begin{proof}
	Since $ \mathcal{A}$ is also a PBCI-algebra then the relations ``$\preceq$" and ``$\ll$" coincide. Now, by \ref{PB-4},
	\begin{eqnarray*}
		x \preceq (x \rightarrow y) \twoheadrightarrow y &\Rightarrow& x \ll (x \rightarrow y) \twoheadrightarrow y\\
		&\Rightarrow& x \twoheadrightarrow ((x \rightarrow y) \twoheadrightarrow y) = \top \\
		&\stackrel{\ref{Curry2-C}}{\Rightarrow}& (x \rightarrow y)\twoheadrightarrow (x \twoheadrightarrow y) = \top \\
		&\Rightarrow& (x \rightarrow y) \ll (x \twoheadrightarrow y)\\	
		&\Rightarrow& (x \rightarrow y) \preceq (x \twoheadrightarrow y).			
	\end{eqnarray*}
	Therefore, since $x \twoheadrightarrow y \preceq x \rightarrow y$ \ref{PropCI15} for all $x, y \in A$ then, from axiom \ref{PropCI6}, follows $x \twoheadrightarrow y = x \rightarrow y$, for all $x, y \in A$. We conclude by Proposition \ref{PropSBCIBCI} that $ \langle A, \rightarrow,\top \rangle $ is a BCI-algebra.	
	
\end{proof}

Figure 1 shows how PBCI and SBCI algebras are classified.

\begin{center}
	\begin{figure}[h]
		\centering
		\includegraphics[scale=0.15]{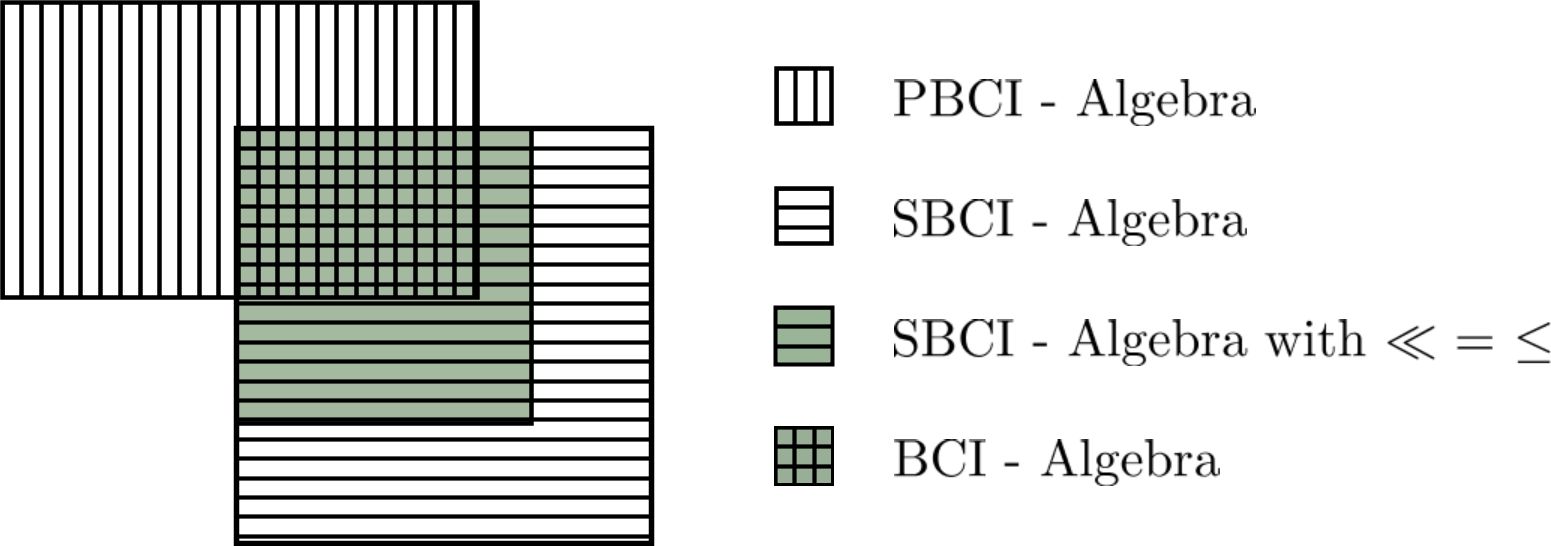}
		\caption{Relation between SBCI and PBCI-algebras.}
	\end{figure}
\end{center}

\section{Final Remarks}\label{SecFinalRem}

	This paper proposes a new algebraic structure which generalizes the notion of BCI-algebra. It is an algebraic structure which captures the most important properties of a Fuzzy Implication after it has been ``intervalized'' in a correct way. The resulting structures, called \textbf{Semi-BCI algebras}, capture the properties of  the structures which arise from the ``intervalization'' of BCI-algebras. In other words, as BCI-algebras abstract the \L ukasiewicz algebra, the SBCI-algebras abstract the respective intervalization of the \L ukasiewicz algebra. The paper also provides the connection of such structures with PBCI-algebras.
	
	As further steps, the authors aim  to investigate more closely this structure.
	Some questions like the logical counter-part of such algebras requires a deeper investigation. Entities like filters, ideals, category and others require investigation. In other words, the generalization of BCI-algebras in terms of its intervalization is provided from algebraic viewpoint, but the logical correspondence (The resulting Interval Fuzzy Logic) requires more
	reflection. Contrary to BL-algebras (BCIs with extra properties) a question is posed:  
	
	\begin{quotation}
		``Is interval correctness incompatible with logical principles, like the notion of deducibility and its whole connection with implications?''
	\end{quotation}
	
	This question is important, since the answer can limit the term: ``Interval Fuzzy Logic'' in a broad sense.
	
\section*{Acknowledgments}  This work was  supported  by the Brazilian funding agency  \textbf{CNPq (Brazilian Research Council)} under Projects: \textsf{304597/2015-5, 311796/2015-0, 482809/2013-2 and 307781/2016-0}), \textbf{Marie Curie (EU-FP7)} under the project: PIRSES-GA-2012-318986 and \textbf{R\&D Unit 50008}, financed by the applicable financial framework (FCT/MEC through national funds and when applicable co-funded by \textbf{FEDER – PT2020} partnership agreement).

\section*{References}

\end{document}